\newtheorem{Def}{Definition}[section]
\newtheorem{thm}{Theorem}[section]
\newtheorem{lem}{Lemma}[section]
\newtheorem{col}{Corollary}[section]
\newtheorem{rmk}{Remark}[section]
\newtheorem{Pro}{Proposition}[section]
\newcommand{\g}{\mathfrak{g}}
\numberwithin{equation}{section}
\title{On an infinite commuting ODE system associated \\ to a simple Lie algebra}
\date{\empty}
\author{Di Yang, \quad Cheng Zhang, \quad Zejun Zhou}
\begin{document}
\maketitle

	\begin{abstract}
Inspired by a recent work of Dubrovin~\cite{Du1}, for each simple Lie algebra $\mathfrak{g}$, we introduce an infinite family of pairwise commuting 
ODEs 
and define their $\tau$-functions. 
We show that these $\tau$-functions can be identified with the $\tau$-functions for the Drinfeld--Sokolov 
hierarchy of $\g$-type. 
Explicit examples for $\g=A_1$ and $A_2$ are provided, which are connected to the KdV hierarchy and the Boussinesq hierarchy respectively.
\end{abstract}



\tableofcontents
	
\section{Introduction}
For any affine Kac--Moody algebra, Drinfeld and Sokolov \cite{DS} introduced an integrable hierarchy of partial differential equations (PDEs) 
associated with a choice of vertex in the corresponding Dynkin diagram, now widely known as the 
{\it Drinfeld--Sokolov hierarchy}. 

Let $\g$ be a simple Lie algebra over $\mathbb C$. 
The Drinfeld--Sokolov hierarchy for the \mbox{untwisted} affine Kac--Moody algebra $\hat\g^{(1)}$ under a special  
choice of  vertex \cite{DS} (often denoted by $c_0$) is referred to as the {\it Drinfeld--Sokolov hierarchy of $\g$-type}  \cite{BDY2}, which 
is a generalization of the celebrated Korteweg--de Vries (KdV) hierarchy  (corresponding to $\g=\mathfrak{sl}_2(\mathbb{C})$).
The Drinfeld--Sokolov hierarchy of $\g$-type plays an important r\^ole in, for instance, the quantum singularity theory \cite{Du1, DLZ, FSZ, FJR, 
  LRZ, W1, W2}.    

In a recent study of the KdV hierarchy, Dubrovin~\cite{Du2} introduced an infinite dimensional ODE system,  which consists of 
infinitely many commuting hamiltonian equations, with the Poisson bracket given by an $r$-matrix structure.
He  defined $\tau$-functions for this   ODE system,
and established the relationship between this ODE system and the KdV hierarchy as well as the connection between their $\tau$-functions. An important consequence of the ODE system is to provide a direct way to approaching theta-functions as special $\tau$-functions of the KdV hierarchy \cite{Du2} (see also \cite{Du3}).   

The  content of this paper can be seen as a continuation of the program initiated by Dubrovin \cite{Du2}, 
namely, our goal is to give a generalization of Dubrovin's construction (corresponding to  $\g=\mathfrak{sl}_2(\mathbb{C})$)  to other simple Lie algebras. 
To be precise, we construct the ODE system for the Drinfeld--Sokolov hierarchy of~$\g$-type,  
define its $\tau$-functions and establish the relations of the $\tau$-functions to
those for the Drinfeld--Sokolov hierarchy of~$\g$-type.

Let a simple Lie algebra $\g$ be given. Denote by $n$ the rank and by $N$ the dimension. Let $h, h^{\vee}$ be the Coxeter and dual Coxeter numbers of $\g$, and let $m_1=1<m_2\leq\cdots\leq m_{n-1}<m_n=h-1$ be the exponents of $\g$~\cite{Kac}. 
Also denote by $(\cdot|\cdot)$ 
the normalized Cartan--Killing form, namely, 
\begin{equation}
		\left(x|y\right)=\frac{1}{2h^{\vee}}{\rm tr}\left(\rm ad_x\cdot ad_y\right),\quad x,y \in \mathfrak{g}\,. 
	\end{equation} 	 

Let   $\mathfrak{g}((\lambda^{-1}))$ be  the formal Laurent series in $\lambda^{-1}$ with coefficients in $\g$.  The normalized Cartan-Killing form  can be extended to   $\mathfrak{g}((\lambda^{-1}))$ through
\begin{equation}
	(x\lambda^{l_1}|y\lambda^{l_2})=(x|y)\lambda^{l_1+l_2}\,,\quad x,y\in \g\,,\quad l_1,l_2\in\mathbb{Z}\,.
\end{equation}
Recall from \cite{Kac, DS} that the {\em principal gradation} of $\mathfrak{g}((\lambda^{-1}))$ is defined by the following degree assignments:
\begin{equation}\label{Prin_degree}
\deg\lambda=h\,,\quad \deg E_i = -\deg F_i = 1\,, \quad i =1, \dots, n\,,
\end{equation} 
  where  $E_i$, $F_i$ are the Weyl generators of~$\g$.
Both $\g((\lambda^{-1}))$ and $\g$ can be decomposed into direct sums of subspaces with homogeneous degrees:
\begin{equation}
\g((\lambda^{-1})) = \widehat{\bigoplus_{k\in\mathbb{Z}}}\left(\g((\lambda^{-1}))\right)^{k}\,,\quad \g=\bigoplus_{k=-h+1}^{h-1}\g^{k}\,,
\end{equation}
where the upper index $k$ denotes the degree, namely, elements in $\g((\lambda^{-1}))^{k}$ and $\g^k$ are homegeneous of degree $k$.

Let $(e_j)_{j=1,\dots,N}$ be a  basis of~$\mathfrak{g}$, homogeneous with respect to the principal gradation. 
Denote $d_j=\deg e_j$ and $K_{ij}=\left(e_i|e_j\right)$,
and let $(e^i)_{i=1,\dots,N}$ be  the dual basis defined by $e^i=\sum_{j=1}^{N}K^{ij}e_j$ with $(K^{ij})=(K_{ij})^{-1}$. 
The structure constants $C^{ij}_k$ for $\g$ in terms of the dual basis are given by 
\begin{equation}
[e^i,e^j]=\sum_{k=1}^{N}C^{ij}_ke^k\,.
\end{equation}

Let $\Lambda(\lambda)$ denote the {\em cyclic element} of   $\mathfrak{g}((\lambda^{-1}))$  
 \cite{DS, Kac1978,Kostant} (which will be defined in \eqref{LambdaDef} in Section \ref{sec:3}). Consider a $\g$-valued formal Laurent series $W(\lambda)$ as follows:
\begin{equation}\label{WDef}
W(\lambda)=\Lambda(\lambda)+\sum_{j=1}^N \sum _{   k\in {\mathbb Z},\, k\geq \frac{d_j}{h}  -1  }  u_k^j \,e_j \, \lambda^{-k-1}\,, 
	\end{equation} 
        where  $u_k^j$ are indeterminates, and let $\mathcal{P}$ be the polynomial ring in $u_k^j$ over $\mathbb{C}$:
        \begin{equation}
\mathcal{P}=\mathbb{C}\left[u^j_k\,\Big|\,j=1,\dots,N,\, k\in \mathbb{Z}, \,k \geq \frac{d_j}{h}-1\right]\,.
        \end{equation}
   
Based on the $r$-matrix structure for     $\mathfrak{g}((\lambda^{-1}))$  given in \cite{FT-book}, a Poisson bracket on~$\mathcal{P}$ can be defined as follows: 
\begin{equation}\label{r-matixPoisson}
		\left\{W(\lambda) \underset{'}{\otimes} W(\mu)\right\}=\left[r(\lambda-\mu),W(\lambda)\otimes I+I\otimes W(\mu)\right],
\end{equation} 
where $I$ denotes the  identity matrix and
\begin{equation}\label{rmatrixDef}
r(\lambda)=\sum_{j=1}^{N}\frac{e^j\otimes e_j}{\lambda}\,.  
\end{equation} 
Here $r(\lambda) $ is the well-known rational $r$-matrix which is a solution to the classical Yang--Baxter equation  \cite{BD-r-matrix, Sklyanin, FT-book}.
Alternatively, \eqref{r-matixPoisson} can be written as
\begin{equation}\label{PoiBra2}
	\left\{u^i(\lambda),u^j(\mu)\right\}=\sum_{k=1}^{N}C^{ij}_{k}\frac{u^k(\lambda)-u^k(\mu)}{\lambda-\mu}+\frac{\left([e^i,e^j]|\Lambda(\lambda)-\Lambda(\mu)\right)}{\lambda-\mu}\,,
      \end{equation}
  where
\begin{equation}\label{eq:ujl}
		u^j(\lambda):=\sum_{k\in\mathbb{Z},\, k\geq\frac{d_j}{h}-1}u^j_k\lambda^{-k-1}\,.
\end{equation}
By direct computations,  
 one can check that the bracket \eqref{r-matixPoisson} (or, equivalently \eqref{PoiBra2}) is well defined.  In Section \ref{sec:3}, we will also provide a direct verification that this bracket is indeed a Poisson bracket (see Proposition~\ref{PiossonBracketSEC2} and Remark \ref{rmk:21}).   

The following lemma, which  is an analogue of  \cite[Proposition 4.1]{DS}, is of importance. 
	\begin{lem}\label{FuLem}
          There exists a unique pair $\left(U(\lambda),H(\lambda)\right)\in \left(\mathcal{P}\otimes \mathfrak{g}((\lambda^{-1}))\right)^2$ of the form
          \begin{align}\label{UHform-u}
            U(\lambda)=\sum_{k\geq1}U^{[-k]}(\lambda), &\quad U^{[-k]}(\lambda)\in\mathcal{P}\otimes{\left(\rm Im\,ad_{\Lambda(\lambda)}\right)}^{-k}\,,\\
\label{UHform-h}            H(\lambda)=\sum_{k\geq0}H^{[-k]}(\lambda),& \quad H^{[-k]}(\lambda)\in \mathcal{P}\otimes{\left(\rm Ker\,ad_{\Lambda(\lambda)}\right)}^{-k}\,,
          \end{align}
          such that
          \begin{equation}\label{Flem}
            e^{-{\rm ad}_{U(\lambda)}}W(\lambda)=\Lambda(\lambda)+H(\lambda)\,.
          \end{equation} 
	\end{lem}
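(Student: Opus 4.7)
The plan is to solve \eqref{Flem} recursively with respect to the principal gradation, exploiting the vector-space decomposition
\[
\g((\lambda^{-1})) = \operatorname{Im}\operatorname{ad}_{\Lambda(\lambda)} \oplus \operatorname{Ker}\operatorname{ad}_{\Lambda(\lambda)}\,,
\]
which respects the principal grading, together with the fact that $\operatorname{ad}_{\Lambda(\lambda)}$ restricts to a degree-$(+1)$ bijection on its image. This is the standard Kostant--Drinfeld--Sokolov fact that expresses the regularity of the cyclic element and underlies \cite[Proposition~4.1]{DS}.

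Expand $e^{-\operatorname{ad}_{U(\lambda)}} W(\lambda) = W(\lambda) + \sum_{m\geq 1} \tfrac{(-1)^m}{m!} (\operatorname{ad}_{U(\lambda)})^m W(\lambda)$ and write $W(\lambda) = \Lambda(\lambda) + Q(\lambda)$ with $Q(\lambda) = \sum_{j,k} u_k^j\, e_j\, \lambda^{-k-1}$, whose components have principal degree $\leq 0$. A nested bracket $[U^{[-j_1]},[\ldots,[U^{[-j_m]}, W^{[l]}]\cdots]]$ contributes to the degree $-k$ part precisely when $l - \sum_i j_i = -k$, with $j_i \geq 1$ and $l \in \{1,0,-1,\ldots\}$ (as $W^{[l]} = 0$ for $l \geq 2$). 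The highest-index unknown $U^{[-k-1]}$ can therefore appear only in the single term with $m = 1$, $l = 1$, producing $-[U^{[-k-1]},\Lambda(\lambda)] = \operatorname{ad}_{\Lambda(\lambda)} U^{[-k-1]}$; every remaining contribution involves $W^{[l]}$ with $l \leq 0$ and $U^{[-j]}$ with $j \leq k$.

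Collecting all previously-determined contributions into an element $R^{[-k]} \in \mathcal{P}\otimes \g((\lambda^{-1}))^{-k}$, the degree $-k$ component of \eqref{Flem} becomes
\[
\operatorname{ad}_{\Lambda(\lambda)} U^{[-k-1]} + R^{[-k]} = H^{[-k]}\,.
\]
Decomposing $R^{[-k]} = R^{[-k]}_{\operatorname{Im}} + R^{[-k]}_{\operatorname{Ker}}$ along the splitting above and imposing the constraints $U^{[-k-1]} \in (\operatorname{Im}\operatorname{ad}_{\Lambda(\lambda)})^{-k-1}$ and $H^{[-k]} \in (\operatorname{Ker}\operatorname{ad}_{\Lambda(\lambda)})^{-k}$ forces $H^{[-k]} = R^{[-k]}_{\operatorname{Ker}}$ and identifies $U^{[-k-1]}$ as the unique preimage of $-R^{[-k]}_{\operatorname{Im}}$ under $\operatorname{ad}_{\Lambda(\lambda)}|_{\operatorname{Im}}$. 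Starting from the base case $k = 0$, where $R^{[0]} = Q^{[0]}$, induction on $k$ gives existence and uniqueness simultaneously; polynomial dependence on the $u_k^j$ is preserved at each step since only $\mathcal{P}$-linear operations on finitely many lower-degree terms are involved.

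The principal obstacle is structural rather than computational: it is the ambient linear-algebra input that the direct-sum decomposition is compatible with the principal grading and that $\operatorname{ad}_{\Lambda(\lambda)}|_{\operatorname{Im}}$ is a bijection degree by degree. Both facts follow from Kostant's theorem on the cyclic element (cf.\ \cite{Kostant, Kac1978, DS}). Once these are granted, the recursion runs automatically and yields the unique pair $(U(\lambda), H(\lambda))$ asserted in the lemma.
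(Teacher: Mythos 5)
Your argument is correct and is essentially the paper's own proof: the paper likewise compares the principal-degree $-k$ components of \eqref{Flem}, obtaining the same triangular recursion in which the top unknown $U^{[-k-1]}(\lambda)$ enters only through $\mathrm{ad}_{\Lambda(\lambda)}U^{[-k-1]}(\lambda)$, and then inducts on $k$ exactly as you do, using the graded splitting $\g((\lambda^{-1}))=\mathrm{Im}\,\mathrm{ad}_{\Lambda(\lambda)}\oplus\mathrm{Ker}\,\mathrm{ad}_{\Lambda(\lambda)}$ and the degree-by-degree invertibility of $\mathrm{ad}_{\Lambda(\lambda)}$ on its image coming from Kostant's theorem on the cyclic element \cite{Kostant,Kac1978}. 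The only difference is presentational: the paper writes the degree-$(-k)$ equation in closed form and defers the inductive step to \cite{BDY2} (see also \cite{DS}), whereas you spell that step out, including the splitting $H^{[-k]}=R^{[-k]}_{\mathrm{Ker}}$ and $U^{[-k-1]}=-\bigl(\mathrm{ad}_{\Lambda(\lambda)}|_{\mathrm{Im}}\bigr)^{-1}R^{[-k]}_{\mathrm{Im}}$.
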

	
        The proof will be  given in Section \ref{sec:3}.

        It is known from \cite{Kac} that ${\rm Ker\, ad}_{\Lambda(\lambda)}$ is an infinite dimensional abelian Lie subalgebra of $ \mathfrak{g}((\lambda^{-1}))$, and there exist $\Lambda_j(\lambda)\neq 0$, $j\in E$, such that
\begin{equation}\label{KeradDecomposition}
	{\rm Ker\, ad}_{\Lambda(\lambda)}=\widehat{\bigoplus_{j\in E}}\mathbb{C}\Lambda_j(\lambda)\,, \quad \deg\Lambda_j(\lambda)=j\,,
\end{equation}
where  $E=\sqcup_{a=1}^{n}(m_a+h\mathbb{Z})$ (we remind the reader that  $m_1, \dots, m_n$ are the exponents of~$\g$). The normalization of $\Lambda_j $, $j\in E$, is chosen as follows: 
\begin{equation}\label{KerNormalLambda_a}
	\Lambda_{m_a+kh}(\lambda)=\lambda^k\Lambda_{m_a}(\lambda),\quad \Lambda_1(\lambda)=\Lambda(\lambda),\quad  \left(\Lambda_{m_a}(\lambda)|\Lambda_{m_b}(\lambda)\right)=h\eta_{ab}\lambda\,.
\end{equation}	
Here and below $\eta_{ab}:=\delta_{a+b,n+1}$.
By \eqref{KeradDecomposition}, we can write 
	\begin{equation}\label{H_hLambda_j}
	  H(\lambda)=:\sum_{a=1}^{n}\sum_{k\geq-1 }\frac{h_{n+1-a,k}}{h}\Lambda_{m_a-kh-2h}(\lambda)\,,\quad h_{a,k}\in\mathcal{P}\,.
\end{equation}

 \begin{Def}\label{def:1}
Define 
an infinite hamiltonian ODE system as follows:
\begin{equation}\label{ODE intro}
	\frac{d W(\lambda)}{d T^a_k}= \bigl\{h_{a,k},W(\lambda)\bigr\}, \quad a=1,\dots,n,\,k\geq 0\,,
      \end{equation}
where $h_{a,k}$, given in \eqref{H_hLambda_j},  are the hamiltonians and $T^a_k$ are the associated
time variables.
\end{Def}  

Define 
$R_a(\lambda)\in\mathcal{P}\otimes\mathfrak{g}((\lambda^{-1})),a=1,\dots, n$, by 
\begin{equation}\label{DefBasicResolvent}
R_a(\lambda):=e^{{\rm ad}_{U(\lambda)}}\Lambda_{m_a}(\lambda)\,.
\end{equation}
Analogous to~\cite{BDY2, Du2, DS}, we call $R_a(\lambda)$ the {\it basic $\mathfrak{g}$-resolvents of $W(\lambda)$}.

\begin{thm}\label{thm:1} 
The 
 ODE system~\eqref{ODE intro} can be written as
\begin{equation}\label{eq:ODEhamil}
\frac{dW(\lambda)}{dT^{a}_k}=\left[\left(\lambda^k R_a(\lambda)\right)_+,W(\lambda)\right], \quad a=1,\dots,n,\,k\geq 0\,,
\end{equation}
where $(\,)_+$ means the polynomial part in~$\lambda$.
\end{thm}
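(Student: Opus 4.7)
The plan is to rewrite each hamiltonian $h_{a,k}$ as a residue of the pairing $(R_a(\lambda)|W(\lambda))$ and then use the $r$-matrix form of the Poisson bracket, together with the commutation $[R_a(\lambda),W(\lambda)]=0$, to convert the hamiltonian flow into Lax form. First, using the ${\rm ad}$-invariance of $(\cdot|\cdot)$ together with Lemma \ref{FuLem}, one has
\begin{equation*}
\bigl(R_a(\lambda)\bigl|W(\lambda)\bigr) = \bigl(e^{{\rm ad}_{U(\lambda)}}\Lambda_{m_a}(\lambda)\bigl|W(\lambda)\bigr) = \bigl(\Lambda_{m_a}(\lambda)\bigl|\Lambda(\lambda)+H(\lambda)\bigr).
\end{equation*}
Combined with the normalization \eqref{KerNormalLambda_a}, the expansion \eqref{H_hLambda_j}, and the identity $\Lambda_{m_b-kh-2h}(\lambda)=\lambda^{-k-2}\Lambda_{m_b}(\lambda)$, this evaluates to $h\delta_{a,n}\lambda+\sum_{k\ge-1}h_{a,k}\lambda^{-k-1}$, giving for $k\geq 0$
\begin{equation*}
h_{a,k}={\rm res}_{\lambda=\infty}\lambda^k \bigl(R_a(\lambda)\bigl|W(\lambda)\bigr).
\end{equation*}

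Next, I would establish the commutation $[R_a(\lambda),W(\lambda)]=0$. Because ${\rm Ker\,ad}_{\Lambda(\lambda)}$ is abelian and $\Lambda_{m_a}(\lambda)$, $\Lambda(\lambda)$, $H(\lambda)$ all lie in it, we have $[\Lambda_{m_a}(\lambda),\Lambda(\lambda)+H(\lambda)]=0$, and conjugating by $e^{{\rm ad}_{U(\lambda)}}$ yields $[R_a(\lambda),W(\lambda)]=0$. Then I would apply the Poisson bracket \eqref{r-matixPoisson} and pair the first tensor slot with $\lambda^k R_a(\lambda)$ using the bilinear form $(\cdot|\cdot)$. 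The completeness identity $\sum_{j}(X|e^{j})e_j=X$ reduces the $r$-matrix contribution to
\begin{equation*}
\bigl\{h_{a,k},W(\mu)\bigr\}\; = \; {\rm res}_{\lambda=\infty}\lambda^k\frac{[R_a(\lambda),W(\mu)]}{\lambda-\mu}\; + \;\text{(Leibniz corrections)},
\end{equation*}
where the corrections come from the dependence of $R_a(\lambda)$ on $W$ through $U(\lambda)$. Expanding $1/(\lambda-\mu)=\sum_{i\ge 0}\mu^{i}\lambda^{-i-1}$ around $\lambda=\infty$ and taking the residue converts the main term into precisely $\bigl[(\mu^k R_a(\mu))_{+},\,W(\mu)\bigr]$, which is the desired Lax form.

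The main obstacle is the last point: justifying that the Leibniz corrections cancel. Since $R_a$ is built from $U$ via \eqref{DefBasicResolvent}, a naive application of the Poisson bracket produces extra terms from $\{R_a(\lambda),W(\mu)\}$. The cleanest strategy is to show that these corrections assemble into an expression proportional to $[R_a(\lambda),W(\lambda)]$ contracted with a universal tensor, and hence vanish by Step~2; this can be made rigorous either directly via the recursive construction of $U^{[-k]}$ guaranteed by Lemma~\ref{FuLem}, or more conceptually via the Semenov-Tian-Shansky formalism for $r$-matrix brackets, in which hamiltonians given by ${\rm Ad}$-invariant functions of a Lax matrix that commute with it automatically generate Lax-type flows. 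Once this step is dispatched, the equivalence of \eqref{ODE intro} and \eqref{eq:ODEhamil} follows.
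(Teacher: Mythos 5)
Your plan is correct and, once unwound, it is essentially the $e^{{\rm ad}_{U}}$-conjugate of the paper's own argument rather than an independent one; still, the packaging differs enough to be worth comparing. The paper proves the generating-series identity \eqref{eq:hbW2} by writing $h_b(\mu)=\bigl(\Lambda(\mu)+H(\mu)\big|\Lambda_{m_b}(\mu)\bigr)$ (from \eqref{HhakLambda}) and pushing the bracket through $e^{-{\rm ad}_{U(\mu)}}$: the Leibniz terms produced by the $u$-dependence of $U$ are packed into the commutator $[S_j(\mu,\lambda),\Lambda(\mu)+H(\mu)]$, which dies upon pairing with $\Lambda_{m_b}(\mu)$ by ad-invariance together with the abelianness of ${\rm Ker\,ad}_{\Lambda(\mu)}$. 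You work on the dressed side instead, with $h_{a,k}={\rm res}_{\lambda}\,\lambda^k\bigl(R_a(\lambda)\big|W(\lambda)\bigr)$ (the residue form of \eqref{TrWR}) and with the corrections annihilated against $[R_a(\lambda),W(\lambda)]=0$, i.e.\ \eqref{WRcommutator}; these two cancellations correspond exactly under conjugation by $e^{{\rm ad}_{U(\lambda)}}$. In particular your deferred ``main obstacle'' is genuinely dispatchable along the first route you indicate, with no need for the general Semenov--Tian--Shansky machinery: by the differential-of-the-exponential formula,
\begin{equation*}
\{R_a(\lambda),u^j(\mu)\}=\bigl[\zeta_j(\lambda,\mu),R_a(\lambda)\bigr],\qquad
\zeta_j(\lambda,\mu)=\sum_{l\geq0}\frac{({\rm ad}_{U(\lambda)})^l}{(l+1)!}\,\{U(\lambda),u^j(\mu)\}\,,
\end{equation*}
which is the mirror image of the paper's $S_j(\mu,\lambda)$, so every Leibniz correction takes the form ${\rm res}_{\lambda}\,\lambda^k\bigl(\zeta_j(\lambda,\mu)\big|[R_a(\lambda),W(\lambda)]\bigr)=0$ by your Step 2, and your ``universal tensor'' is precisely $\zeta_j$. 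Two small points to watch: your ${\rm res}_{\lambda=\infty}$ must mean ``coefficient of $\lambda^{-1}$'' (the standard residue at infinity carries the opposite sign), and in the main term the diagonal contribution $\bigl(R_a(\lambda)\big|[e^j,W(\lambda)]\bigr)$ arising from \eqref{eq:ujW} also vanishes only because of \eqref{WRcommutator} --- the same mechanism, used once more, and the place where the paper silently invokes Lemma~\ref{lemRR} as well. What your formulation buys is a cleaner conceptual statement (the hamiltonians are pairings of $W$ against its commuting resolvents, so their gradients are the resolvents themselves --- the classical Adler--Kostant--Symes picture) and a transparent origin for the truncation $(\lambda^kR_a(\lambda))_+$ via the expansion of $1/(\lambda-\mu)$; what the paper's formulation buys is that the conjugated identity \eqref{proofthm2} with its $\sigma_{a,k}$ is then recycled verbatim in the proofs of Theorem~\ref{thm:2} and Lemma~\ref{LoopRa}.
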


\begin{thm}\label{thm:2}
  The hamiltonians are in involution, i.e., 
 \begin{equation}
	\left\{h_{a,k},h_{b,l}\right\}=0,  \quad \forall\,a,b=1,\dots,n, \, k,l\geq 0\,.
\end{equation}
\end{thm}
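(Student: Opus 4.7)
The plan is to express each Hamiltonian $h_{a,k}$ as a residue pairing against the corresponding basic resolvent $R_a(\lambda)$, to establish a Lax-type evolution equation for the basic resolvents themselves, and then to deduce involution from a short calculation using the $\mathrm{ad}$-invariance of the Cartan--Killing form.

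First, from the expansion \eqref{H_hLambda_j} together with the normalization $(\Lambda_{m_a}(\lambda)|\Lambda_{m_b}(\lambda)) = h\eta_{ab}\lambda$ in \eqref{KerNormalLambda_a}, the Hamiltonians can be extracted as
\[
h_{a,k} \;=\; \mathrm{res}_{\lambda}\,\lambda^k\,\bigl(H(\lambda)\,\big|\,\Lambda_{m_a}(\lambda)\bigr), \qquad k \geq 0.
\]
Since $H(\lambda) = e^{-\mathrm{ad}_{U(\lambda)}}W(\lambda) - \Lambda(\lambda)$ and the invariance of the form exponentiates to $(e^{-\mathrm{ad}_U}X|Y) = (X|e^{\mathrm{ad}_U}Y)$, this rewrites as
\[
h_{a,k} \;=\; \mathrm{res}_{\lambda}\,\lambda^k\,\bigl(W(\lambda)\,\big|\,R_a(\lambda)\bigr),
\]
the subtracted $(\Lambda|\Lambda_{m_a})$-piece being Poisson-central and vanishing inside the residue for $k \geq 0$.

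The heart of the proof is then to establish the Lax-type evolution of the basic resolvents,
\[
\frac{dR_b(\mu)}{dT^a_k} \;=\; \bigl[(\mu^k R_a(\mu))_+,\,R_b(\mu)\bigr].
\]
By \eqref{DefBasicResolvent} and the abelianness of $\mathrm{Ker}\,\mathrm{ad}_{\Lambda(\mu)}$, we have $[R_b(\mu),\,W(\mu)] = 0$. Differentiating this identity along $T^a_k$, invoking Theorem~\ref{thm:1}, and applying the Jacobi identity give
\[
\Bigl[\,\frac{dR_b(\mu)}{dT^a_k} - \bigl[(\mu^k R_a(\mu))_+,\,R_b(\mu)\bigr],\; W(\mu)\,\Bigr] \;=\; 0,
\]
so that the bracketed expression belongs to $\mathrm{Ker}\,\mathrm{ad}_{W(\mu)} = e^{\mathrm{ad}_{U(\mu)}}\mathrm{Ker}\,\mathrm{ad}_{\Lambda(\mu)}$, which is spanned by the $\mu^n R_c(\mu)$. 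To conclude that this element is in fact zero, I would appeal to the uniqueness statement in Lemma~\ref{FuLem}: the evolutions of $U$ and $H$ are uniquely pinned down by that of $W$, and decomposing the induced flow with respect to the splitting $\mathfrak{g}((\lambda^{-1})) = \mathrm{Ker}\,\mathrm{ad}_{\Lambda} \oplus \mathrm{Im}\,\mathrm{ad}_{\Lambda}$ rules out every potential $\mu^n R_c(\mu)$ contribution.

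With these ingredients in hand, the involution becomes a one-line computation:
\begin{align*}
\{h_{a,k},\,h_{b,l}\} \;=\; \frac{dh_{b,l}}{dT^a_k} &\;=\; \mathrm{res}_{\mu}\,\mu^l\,\Bigl(\bigl([(\mu^k R_a)_+,\,W(\mu)]\,\big|\,R_b(\mu)\bigr) + \bigl(W(\mu)\,\big|\,[(\mu^k R_a)_+,\,R_b(\mu)]\bigr)\Bigr) \\
&\;=\; 0,
\end{align*}
the last step being the $\mathrm{ad}$-invariance $([X,Y]|Z) + (Y|[X,Z]) = 0$. The principal obstacle is the middle step, namely establishing the Lax-type evolution of $R_b(\mu)$; once this is at hand, involution follows immediately.
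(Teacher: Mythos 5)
Your steps 1 and 3 are correct: the residue formula is exactly \eqref{TrWR} of Lemma~\ref{lemRR} combined with \eqref{hamiltonianSeries}, and the closing ad-invariance computation is fine. But your route is genuinely different from the paper's, which never touches the resolvent evolution: the paper applies $e^{-{\rm ad}_{U(\lambda)}}$ to \eqref{eq:ODEhamil} to obtain $\{h_{a,k},H(\lambda)\}+[\sigma_{a,k}(\lambda),\Lambda(\lambda)+H(\lambda)]=0$ (equation \eqref{proofthm2}) and then simply pairs with $\Lambda_{m_b}(\lambda)$: by \eqref{HhakLambda} the pairing of $\{h_{a,k},H(\lambda)\}$ gives $\{h_{a,k},h_b(\lambda)\}$, while the pairing of the commutator term vanishes by ad-invariance since $[\Lambda(\lambda)+H(\lambda),\Lambda_{m_b}(\lambda)]=0$ (the kernel \eqref{KeradDecomposition} is abelian). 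That argument needs no residues, no Ker/Im splitting, and in fact proves slightly more than the stated theorem, namely $\{h_{a,k},h_{b,l}\}=0$ for all $l\geq -1$.

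The genuine gap is the step you yourself flag as the principal obstacle. Your Lax-type equation for $R_b(\mu)$ is precisely \eqref{RRZoreCur}, i.e.\ Lemma~\ref{LoopRa}; but in the paper this lemma is proved \emph{after} and \emph{using} Theorem~\ref{thm:2}, so you cannot cite it and must prove it independently --- and your sketch does not. Differentiating $[R_b(\mu),W(\mu)]=0$ only places the discrepancy $X:=\{h_{a,k},R_b(\mu)\}-[(\mu^kR_a(\mu))_+,R_b(\mu)]$ in the centralizer of $W(\mu)$, i.e.\ in the $\mathcal{P}$-span of the $\mu^m R_c(\mu)$ (and even the identification ${\rm Ker\,ad}_{W(\mu)}=e^{{\rm ad}_{U(\mu)}}{\rm Ker\,ad}_{\Lambda(\mu)}$ requires a graded argument, since a priori one only gets ${\rm Ker\,ad}_{\Lambda(\mu)+H(\mu)}$). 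Appealing to the uniqueness in Lemma~\ref{FuLem} does not close this: uniqueness of the pair $(U,H)$ constrains the dressing of $W$, not the derivatives of $R_b$, and by itself rules out none of the kernel-valued candidates. The step \emph{can} be closed as follows: conjugating back one finds $e^{-{\rm ad}_{U(\mu)}}X=[\sigma_{a,k}(\mu),\Lambda_{m_b}(\mu)]$ with the same $\sigma_{a,k}$ as in \eqref{proofthm2}; then a degree-by-degree downward induction on \eqref{proofthm2}, splitting each homogeneous component along ${\rm Ker\,ad}_{\Lambda}\oplus{\rm Im\,ad}_{\Lambda}$ and using the injectivity of ${\rm ad}_{\Lambda}$ on its image (the argument of \cite{BDY2} the paper quotes for \eqref{S_2LH}), forces the image component of $\sigma_{a,k}$ to vanish, whence $[\sigma_{a,k}(\mu),\Lambda_{m_b}(\mu)]=0$ and $X=0$. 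Note, however, that this same induction simultaneously forces $\{h_{a,k},H(\mu)\}=0$, which \emph{is} the involution statement; so once your step 2 is made rigorous, your residue computation in step 3 becomes superfluous. This is exactly why the paper proves involution first, by the cheap pairing argument, and only afterwards derives the resolvent equation from it.
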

The proofs of the above theorems will be given in Section \ref{sec:3}.

 Following~\cite{BDY2, Du2}, define $\omega_{a,k;b,l}\in\mathcal{P}$ by means of generating series as follows:
\begin{equation}\label{ODE_tauStucture}
\sum_{k,l\geq 0}\frac{\omega_{a,k;b,l}}{\lambda^{k+1}\mu^{l+1}}=\frac{\left(R_a(\lambda)|R_b(\mu)\right)}{\left(\lambda-\mu\right)^2}-\eta_{ab}\frac{m_a\lambda+m_b\mu}{\left(\lambda-\mu\right)^2} \,, \quad a,b=1,\dots,n\,.
\end{equation}
We will show in Lemma \ref{lemmaOmegaSys}  that the polynomials $\omega_{a,k;b,l}$ satisfy the following identities:
\begin{equation}\label{OmegaSys}
		\omega_{a,k;b,l}=\omega_{b,l;a,k},\quad\frac{d\omega_{a,k;b,l}}{d T^c_m}=\frac{d\omega_{c,m;a,k}}{d T^b_l}=\frac{d\omega_{b,l;c,m}}{d T^a_k}\,,
	\end{equation} 
for $a,b,c=1,\dots,n$,  $k$, $l$, $m\geq 0$. We call $\omega_{a,k;b,l}$ the \textit{$\tau$-structure} for the 
 ODE system~\eqref{ODE intro}.

Due to \eqref{OmegaSys}, for any  solution $W(\lambda,\mathbf{T})$ to \eqref{ODE intro}, there exists a function $\tau:=\tau(\mathbf{T})$ satisfying
\begin{equation}\label{deftau}
\frac{d^2 \log \tau(\mathbf{T})}{dT^a_kdT^b_l}=\omega_{a,k;b,l}(\mathbf{T})\,,\quad a,b=1,\dots,n, \, k,l\geq 0\,,
\end{equation}
where $\mathbf{T}$ denotes the set of time variables $T^a_k$, $a=1,\dots,n $, $k\geq 0$. 
We call $\tau(\mathbf{T})$ the {\it $\tau$-function of the solution $W(\lambda,\mathbf{T})$} to the ODE system~\eqref{ODE intro},
which is determined by $W(\lambda,\mathbf{T})$ up to a multiplicative factor of the form \begin{equation}\label{tauFactor}
	 e^{d_0+\sum_{a=1}^{n}\sum_{k\geq 0}d_{a,k}T^a_k}\,,
\end{equation}where $d_0,d_{a,k}$ are arbitrary constants. 

The following theorem  will be proved in Section \ref{section4}. 
\begin{thm}\label{thm:3}
Every $\tau$-function $\tau(\mathbf{T})$ for the ODE system~\eqref{ODE intro} is a $\tau$-function for the Drinfeld--Sokolov hierarchy of $\g$-type; vice versa. 
\end{thm}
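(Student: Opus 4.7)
The strategy is to identify the ODE system \eqref{ODE intro} with the Drinfeld--Sokolov hierarchy of $\g$-type realized on the infinite jet space, with $T^1_0$ playing the role of the spatial variable $x$. Once this identification is in place, the $\tau$-structures on the two sides are given by the same generating formula, so the $\tau$-functions must agree.

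I would begin by recalling the standard Lax form of the DS hierarchy of $\g$-type,
\begin{equation*}
\frac{\partial \mathcal{L}}{\partial t^a_k} = \bigl[(\lambda^k R_a^{\rm DS}(\lambda))_+, \mathcal{L}\bigr], \qquad \mathcal{L}=\partial_x+\Lambda(\lambda)+Q(x),
\end{equation*}
where $Q(x)$ lies in a fixed gauge slice and $R_a^{\rm DS}=e^{{\rm ad}_{U^{\rm DS}}}\Lambda_{m_a}(\lambda)$ comes from the unique dressing $e^{-{\rm ad}_{U^{\rm DS}}}\mathcal{L} = \partial_x+\Lambda+H^{\rm DS}$; following \cite{BDY2}, the DS $\tau$-function $\tau^{\rm DS}$ is characterized (up to an ambiguity of the form \eqref{tauFactor}) by precisely the identities \eqref{deftau}--\eqref{ODE_tauStucture} with $R_a$ replaced by $R_a^{\rm DS}$.

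Next I would set up the dictionary. Each coefficient $u^j_k$ appearing in \eqref{WDef} is identified with the Taylor coefficient $(\partial_x^k q^j)(0)$ of the corresponding component of $Q(x)$, turning $\mathcal{P}$ into the ring of jet functions of the DS fields, and the times are matched via $T^1_0=x$ and $T^a_k=t^a_k$ for $(a,k)\neq(1,0)$. The crucial step is to verify that the first ODE flow coincides with the DS spatial translation: using the grading structure of $U$ from Lemma \ref{FuLem} and the relation $e^{-{\rm ad}_U}W=\Lambda+H$, one computes $(R_1(\lambda))_+=\Lambda(\lambda)+Q$ modulo elements commuting with $W(\lambda)$, so the $(a,k)=(1,0)$ case of \eqref{eq:ODEhamil} becomes $\partial_{T^1_0}W=[\Lambda+Q,W]=\partial_x W$ on the jet image. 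The remaining ODE flows $\partial/\partial T^a_k$ then agree with the higher DS flows $\partial/\partial t^a_k$ term by term, since Theorem \ref{thm:1} casts both in the same Lax form and the basic resolvents \eqref{DefBasicResolvent} become $R_a^{\rm DS}$ under the identification.

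The proof concludes by noting that the generating series \eqref{ODE_tauStucture} defining the ODE $\tau$-structure is the same expression as the one defining the DS $\tau$-structure once the resolvents are identified, so the equations \eqref{deftau} for the ODE $\tau$-function agree with those for $\tau^{\rm DS}$; hence the two $\tau$-functions coincide up to the factor \eqref{tauFactor}, which yields both implications. The main obstacle I expect is the rigorous comparison of the two dressings --- Lemma \ref{FuLem} dresses $W$ alone, whereas DS dresses $\partial_x+W$, so $U$ and $U^{\rm DS}$ differ in general --- and the corresponding degree-by-degree verification that $(R_1(\lambda))_+$ indeed generates the spatial translation on the jet image, which is the linchpin of the entire identification.
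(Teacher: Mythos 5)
Your overall strategy---identify the two systems through their basic resolvents, let $T^1_0$ play the role of the spatial variable, and then match the $\tau$-structures---is indeed the strategy of the paper, but both load-bearing steps are missing or wrong in your sketch. First, the coefficientwise dictionary $u^j_k\leftrightarrow(\partial_x^k q^j)(0)$ cannot work as stated: $W(\lambda)$ carries $N$ infinite families of coordinates while $q$ is only $\mathfrak{b}$-valued (already for $A_1$ this is three families against two), and the ODE phase space contains Casimir directions (the $h_{a,-1}$-flows) invisible to the $\tau$-structure. The actual correspondence is the nonlinear map given by the initial data \eqref{InData}, $q(x,\mathbf{0})=R_1(\lambda;T^1_0=-x,\mathbf{0})_+-\Lambda(\lambda)$, where the $x$-dependence of $q$ is \emph{generated} by the $T^1_0$-flow of the ODE system; note also that \eqref{PDEtau} forces the identification $x=-T^1_0$, not $x=T^1_0$ as you wrote. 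Your ``crucial step'' ($\partial_{T^1_0}W=[\Lambda+Q,W]=\partial_x W$ ``on the jet image'') is not meaningful until this map is in place, since there is no a priori $x$-derivative acting on the coefficients of $W$.

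Second, the obstacle you flag in your last sentence---comparing the dressing of $W$ with the dressing of $\partial_x+\Lambda+q$---is not a peripheral difficulty: it is the entire content of the proof, and it resolves in the opposite direction from what you expect. Using Lemma~\ref{LoopRa} in the form $dR_1/dT^1_0=[(R_1)_+,R_1]$ and the initial data above, one checks $[\mathcal{L}|_{\mathbf{T}=\mathbf{0}},R_1(\lambda;-x,\mathbf{0})]=0$, hence $[e^{-{\rm ad}_{U(\lambda;-x,\mathbf{0})}}\mathcal{L}|_{\mathbf{T}=\mathbf{0}},\Lambda(\lambda)]=0$; the degree conditions of Lemma~\ref{FuLem} and the uniqueness in \eqref{UHformPDE}--\eqref{PDEflemmeq} then force $\mathcal{U}(\lambda,x)=U(\lambda;-x,\mathbf{0})$, so the two dressings \emph{coincide} along the $x$-line at $\mathbf{T}=\mathbf{0}$ and $R^{PDE}_a(\lambda,x)=R_a(\lambda;-x,\mathbf{0})$. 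Finally, the identity of the two generating formulas for the $\tau$-structures does not by itself give $\tau=\tau^{PDE}$: one needs the resolvents evaluated on the two corresponding solutions to agree as functions of all the times, which your ``term by term'' claim asserts without proof. The paper sidesteps any flow-by-flow comparison for all times by instead matching \emph{all} $m$-th order Taylor coefficients of $\log\tau$ at $\mathbf{T}=\mathbf{0}$ via the multipoint formulas of Corollary~\ref{TaylorTauODE} and Lemma~\ref{TaylorTauPDE}, which require the resolvents only at $\mathbf{T}=\mathbf{0}$; together with the ambiguity \eqref{tauFactor} this gives one direction, and the converse follows by the symmetric argument starting from $W(\lambda,\mathbf{0})=R^{PDE}_1(\lambda;x=0,\mathbf{0})$, for which $e^{-{\rm ad}_{\mathcal{U}}}W(\lambda,\mathbf{0})=\Lambda(\lambda)$ makes the uniqueness argument immediate. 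Without these two ingredients your proposal does not close.
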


The rest of the paper is organized as follows. In Section~\ref{sec:3}, we provide details of the construction of the  ODE system~\eqref{ODE intro}, 
and prove Lemma~\ref{FuLem}, Theorem~\ref{thm:1} and Theorem~\ref{thm:2}. 
In Section~\ref{sec:4}, we study the $\tau$-structure for the ODE system~\eqref{ODE intro}. 
Section \ref{section4} aims to prove Theorem~\ref{thm:3}. In Section \ref{sec:2},
explicit examples for $\g = A_1$ and $A_2$ are provided. 
Finally, in Section~\ref{sec:concl} are gathered our conclusions which are put into perspectives.

\section{An infinite commuting ODE system associated to $\g$} 
\label{sec:3}	
This section aims to give  details of the construction of the  ODE system~\eqref{ODE intro}. 

Let a simple Lie algebra $\g$ be given. Fix a Cartan subalgebra $\mathfrak{h}$ of $\g$, and let $\Delta \subset \mathfrak{h}^*$ be the root system. We choose a set of simple roots $\Pi=\left\{\alpha_1,\dots,\alpha_n\right\}\subset \mathfrak{h}^*$. Then, $\g$ has the root space decomposition:
 \begin{equation}
		\mathfrak{g}=\mathfrak{h}\oplus_{\alpha\in\triangle}\mathfrak{g}_\alpha\,.
	\end{equation}
For any $\alpha\in \Delta$, denote by $H_\alpha$ the unique element in $\g$ such that $\left(H_\alpha|X\right)=\alpha(X),\forall X\in\g$. The normalized Cartan--Killing form induces a non-degenerate bilinear form on $\mathfrak{h}^*$ through $\left(\alpha|\beta\right)=\left(H_\alpha|H_\beta\right),\alpha,\beta\in \Delta$. 

Let $\theta$ be the highest root with respect to $\Pi$. We can choose $E_\theta\in \g_{\theta}$, $E_{-\theta}\in \g_{-\theta}$, normalized by the conditions $(E_{\theta}|E_{-\theta})=1$ and $\omega(E_{-\theta})=-E_{\theta}$, where $\omega : \g\rightarrow\g$ is the Chevalley involution. Let 
\begin{equation}
	I_+:=\sum_{a=1}^{n}E_a
\end{equation} be the principal nilpotent element of $\g$, where $E_1, \dots, E_n$ are  the Weyl generators ({\it cf.}~\eqref{Prin_degree}).  The cyclic element  $\Lambda(\lambda)$ of   $\mathfrak{g}((\lambda^{-1}))$ is defined as \cite{DS, Kac1978,Kostant}
\begin{equation}\label{LambdaDef}
\Lambda(\lambda)=I_++\lambda E_{-\theta}\,.
\end{equation}

\begin{Pro}\label{PiossonBracketSEC2}
	The bracket \eqref{r-matixPoisson} (or equivalently \eqref{PoiBra2}) is a Poisson bracket.
\end{Pro}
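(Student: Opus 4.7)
The plan is to verify the four defining axioms of a Poisson bracket on~$\mathcal{P}$: $\mathbb{C}$-bilinearity, the Leibniz rule, antisymmetry, and the Jacobi identity. Bilinearity and the Leibniz rule are built into the standard extension of the bracket \eqref{PoiBra2} from the generators $u^j_k$ to all polynomials in~$\mathcal{P}$. Antisymmetry is immediate from \eqref{PoiBra2}: the exchange $(i,\lambda)\leftrightarrow(j,\mu)$ reverses the signs of both $C^{ij}_k$ and $[e^i,e^j]$, while the divided differences $\tfrac{f(\lambda)-f(\mu)}{\lambda-\mu}$ are symmetric in their arguments.

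The Jacobi identity is the nontrivial axiom, and my plan is to reduce it to a standard homogeneous linear bracket by absorbing the cyclic element~$\Lambda(\lambda)$ into the generators. To this end I would introduce the shifted generators
\[
w^j(\lambda):=u^j(\lambda)+\bigl(\Lambda(\lambda)\bigm|e^j\bigr),\qquad j=1,\dots,N,
\]
so that $W(\lambda)=\sum_{j=1}^{N}w^j(\lambda)\,e_j$. Since $w^j(\lambda)-u^j(\lambda)$ is Poisson-constant we have $\{w^i(\lambda),w^j(\mu)\}=\{u^i(\lambda),u^j(\mu)\}$, and the identity $\sum_{k}C^{ij}_k(\Lambda(\lambda)|e^k)=([e^i,e^j]|\Lambda(\lambda))$ makes the inhomogeneous term in \eqref{PoiBra2} cancel exactly, yielding the homogeneous linear bracket
\[
\bigl\{w^i(\lambda),w^j(\mu)\bigr\}=\sum_{k=1}^{N}C^{ij}_k\,\frac{w^k(\lambda)-w^k(\mu)}{\lambda-\mu}.
\]
The Jacobi identity for \eqref{PoiBra2} is therefore equivalent to the Jacobi identity for this standard linear $r$-matrix bracket attached to \eqref{rmatrixDef}.

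For the $w$-bracket I would then compute $\{w^i(\lambda_1),\{w^j(\lambda_2),w^k(\lambda_3)\}\}$ directly by the Leibniz rule, apply the elementary identity $\tfrac{1}{\lambda_1-\lambda_2}-\tfrac{1}{\lambda_1-\lambda_3}=\tfrac{\lambda_2-\lambda_3}{(\lambda_1-\lambda_2)(\lambda_1-\lambda_3)}$ to simplify the divided differences, and sum cyclically over the three labels $(i,\lambda_1),(j,\lambda_2),(k,\lambda_3)$. After this simplification the $\lambda$-denominators factor out and the coefficient of every $w^m(\lambda_s)$ in the cyclic sum reduces to a scalar multiple of
\[
\sum_{l=1}^{N}\bigl(C^{jk}_l C^{il}_m+C^{ki}_l C^{jl}_m+C^{ij}_l C^{kl}_m\bigr),
\]
which vanishes by the Jacobi identity of~$\g$. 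Equivalently, one can invoke the classical Sklyanin framework: since the rational $r$-matrix \eqref{rmatrixDef} satisfies the classical Yang--Baxter equation, the bracket $\{L\underset{'}{\otimes}L\}=[r,L\otimes I+I\otimes L]$ on a $\g$-valued Laurent series automatically satisfies Jacobi.

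The only real obstacle is the combinatorial bookkeeping in the triple bracket; the change of generators $u\mapsto w$ is precisely what converts the mixed bracket \eqref{PoiBra2} into the transparent algebraic identity sketched above, and together with the trivial verification of the other three axioms this completes the proof.
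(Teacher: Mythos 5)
Your proposal is correct and takes essentially the same route as the paper: the paper likewise dispatches bilinearity and skew-symmetry by inspection of \eqref{PoiBra2}, and proves Jacobi by computing the triple bracket $\{u^i(\lambda),\{u^j(\mu),u^k(\nu)\}\}$, obtaining a divided-difference expression cyclically invariant in the three spectral parameters, so that the cyclic sum reduces to the structure-constant identity $\sum_{l}\bigl(C^{jk}_lC^{il}_s+C^{ki}_lC^{jl}_s+C^{ij}_lC^{kl}_s\bigr)=0$, i.e.\ to the Jacobi identity of $\g$. Your preliminary shift $u^j\mapsto w^j=u^j+(\Lambda|e^j)$ is a valid but cosmetic homogenization --- in the paper's direct computation the inhomogeneous terms drop out automatically since $\Lambda(\lambda)$ is affine in $\lambda$ --- and your fallback via the classical Yang--Baxter equation mirrors the paper's Remark~\ref{rmk:21} appeal to the Faddeev--Takhtajan framework.
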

\begin{proof}
The bilinearity and skew-symmetry follows directly from \eqref{PoiBra2}. Computing the quantity $\{u^i(\lambda),\{u^j(\mu),u^k(\nu)\}\}$, one has
\[
  \{u^i(\lambda),\{u^j(\mu),u^k(\nu)\}\}=\sum_{l=1}^{N}\sum_{s=1}^{N}C^{jk}_lC^{il}_s\,\frac{u^s(\lambda)(\mu-\nu)+u^s(\mu)(\nu-\lambda)+u^s(\nu)(\lambda-\mu)}{(\lambda-\mu)(\lambda-\nu)(\mu-\nu)}\,.\]
Then, the Jacobi identity follows from that of $\g$.
The proposition is proved.
\end{proof}

\begin{rmk}\label{rmk:21}
Note that similar formula to  \eqref{r-matixPoisson} given in  \cite[Part~Two, Chapter~IV]{FT-book} defines a Poisson structure for    $\mathfrak{g}((\lambda^{-1}))$. Then, Proposition \ref{PiossonBracketSEC2} holds because of the well-definedness of \eqref{r-matixPoisson} for  
our specific form of $W(\lambda)$ given in \eqref{WDef}. 
\end{rmk}

Now we proceed to the proof of Lemma~\ref{FuLem}.

\begin{proof}[Proof of Lemma~\ref{FuLem}] 
Comparing the components with principal degree  $-k$ on both sides of \eqref{Flem}, we find that 
\eqref{Flem} is equivalent to 
\begin{equation}\label{adWdegk}
 H^{[-k]}(\lambda)
=\sum_{j=-1}^{k}\sum_{s=0}^{k-j} (-1)^s \sum_{1\leq l_1,\dots,l_s\leq k-j\atop l_1+\dots+l_s=k-j}\frac{{\rm ad}_{U^{[{-l_1}]}(\lambda)}\cdots {\rm ad}_{U^{[-l_s]}(\lambda)}}{s!}W^{[-j]}(\lambda)\,,\quad k\ge0.
\end{equation}
Here, $W^{[1]}(\lambda):=\Lambda(\lambda)$. The rest of the proof given by an induction on~$k$ 
	is exactly similar to the one for the analogous lemma~\cite{BDY2} (see also~\cite{DS}).
\end{proof}

	 For convenience, we introduce the generating series of hamiltonians 
	 \begin{equation}\label{hamiltonianSeries}
	 	h_a(\lambda):=h\lambda \delta_{a,n}+\sum_{k\geq -1}h_{a,k}\lambda^{-k-1}\,. 
	 \end{equation}
	 Using \eqref{KerNormalLambda_a} and \eqref{H_hLambda_j}, one gets 
	 \begin{equation}\label{HhakLambda}
\Lambda(\lambda)+H(\lambda)=
                \sum_{a=1}^{n}\frac{h_{n+1-a}(\lambda)}{h\lambda}\Lambda_{m_a}(\lambda)\,.
	\end{equation}
 Lemma \ref{FuLem} together with $R_{a}(\lambda)$, which is the basic $\g$-resolvent defined in \eqref{DefBasicResolvent}, allows us to express  $W(\lambda)$  in the form
\begin{equation}\label{Sec3WhR}
		W(\lambda)=\sum_{a=1}^{n}\frac{h_{n+1-a}(\lambda)}{h\lambda}R_{a}(\lambda)\,. 
              \end{equation}

              Then, we can prove by  straightforward computations the following lemma, which will be useful in later sections.
\begin{lem}\label{lemRR}
		Let $a,b=1,\dots,n$, we have
 		\begin{align}
			(R_a(\lambda)|R_b(\lambda))&=\eta_{ab}h\lambda\,,\label{TrRR}\\
			(W(\lambda)|R_a(\lambda))&=h_a(\lambda)\,,\label{TrWR}
			\\\left[R_a(\lambda),R_b(\lambda)\right]&=0\,,\label{RRcommutator}
			\\ \left[W(\lambda), R_a(\lambda)\right]&=0\,.\label{WRcommutator}
		\end{align}
 	\end{lem}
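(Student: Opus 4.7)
The plan is to reduce all four identities to statements about $\Lambda_{m_a}(\lambda)$ and $\Lambda(\lambda)+H(\lambda)$ by transporting them through the operator $e^{{\rm ad}_{U(\lambda)}}$. Two facts drive everything: $e^{{\rm ad}_{U(\lambda)}}$ is a Lie algebra automorphism (because ${\rm ad}_{U}$ is a derivation of the bracket), and it is an isometry of the extended Cartan--Killing form (because ${\rm ad}$ is skew with respect to any invariant form). Combined with Lemma~\ref{FuLem}, which I rewrite as $W(\lambda) = e^{{\rm ad}_{U(\lambda)}}(\Lambda(\lambda)+H(\lambda))$, and the definition $R_a(\lambda) = e^{{\rm ad}_{U(\lambda)}}\Lambda_{m_a}(\lambda)$ from \eqref{DefBasicResolvent}, each identity becomes a statement inside the abelian Lie subalgebra ${\rm Ker\,ad}_{\Lambda(\lambda)}$.

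For the commutator identities \eqref{RRcommutator} and \eqref{WRcommutator}, the automorphism property reduces them to $[\Lambda_{m_a},\Lambda_{m_b}]=0$ and $[\Lambda+H,\Lambda_{m_a}]=0$, both of which are immediate from the fact that $\Lambda_{m_b}$ and $\Lambda+H$ lie in the abelian subalgebra ${\rm Ker\,ad}_{\Lambda(\lambda)}$ by \eqref{KeradDecomposition} and \eqref{H_hLambda_j}.

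For \eqref{TrRR}, the isometry gives $(R_a|R_b) = (\Lambda_{m_a}|\Lambda_{m_b}) = h\eta_{ab}\lambda$ directly from the normalization \eqref{KerNormalLambda_a}. For \eqref{TrWR}, the isometry yields $(W|R_a) = (\Lambda+H|\Lambda_{m_a})$, and I will split this into two pieces. Writing $\Lambda = \Lambda_{m_1}$ (since $m_1=1$) and using $\eta_{1,a}=\delta_{a,n}$ gives $(\Lambda|\Lambda_{m_a}) = h\lambda\delta_{a,n}$. Expanding $H$ via \eqref{H_hLambda_j} and applying both $\Lambda_{m_b-kh-2h} = \lambda^{-k-2}\Lambda_{m_b}$ and $(\Lambda_{m_b}|\Lambda_{m_a})=h\eta_{ab}\lambda$ from \eqref{KerNormalLambda_a} collapses the double sum (the $\eta_{ab}$ selects $b=n+1-a$) to $\sum_{k\geq -1} h_{a,k}\lambda^{-k-1}$. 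Adding the two pieces recovers $h_a(\lambda)$ as defined in \eqref{hamiltonianSeries}.

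There is no substantive obstacle beyond this bookkeeping; the only preliminary point to note is that $e^{{\rm ad}_{U(\lambda)}}$ is well-defined on elements of bounded principal degree, which follows because every $U^{[-k]}$ in \eqref{UHform-u} has strictly negative principal degree, so each graded piece of the output receives contributions from only finitely many powers of ${\rm ad}_{U(\lambda)}$. With this in hand, both the automorphism and isometry properties are routine.
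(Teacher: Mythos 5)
Your proof is correct and is essentially the paper's own argument: the paper asserts the lemma ``by straightforward computations'' from Lemma~\ref{FuLem}, the definition \eqref{DefBasicResolvent} and the normalization \eqref{KerNormalLambda_a}, and your transport of all four identities through $e^{{\rm ad}_{U(\lambda)}}$ --- using that it is a Lie algebra automorphism and an isometry of the extended form, well defined thanks to the strictly negative principal degrees of the $U^{[-k]}$ --- is exactly that computation. In particular, the reduction of \eqref{RRcommutator}--\eqref{WRcommutator} to the abelian subalgebra ${\rm Ker\,ad}_{\Lambda(\lambda)}$ and your expansion of $(\Lambda(\lambda)+H(\lambda)|\Lambda_{m_a}(\lambda))$ via \eqref{H_hLambda_j}, recovering $h_a(\lambda)$ as in \eqref{hamiltonianSeries}, match the paper's route through \eqref{HhakLambda} and \eqref{Sec3WhR}.
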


        Based on Lemma \ref{FuLem}, we are in the position to prove Theorem \ref{thm:1}.
\begin{proof}[Proof of Theorem \ref{thm:1}]
Using \eqref{hamiltonianSeries} and Definition \ref{def:1}, to show \eqref{eq:ODEhamil} it suffices to show  
  \begin{equation}\label{eq:hbW2}
    		\{h_b(\mu),W(\lambda)\} = \frac{[R_b(\mu),W(\lambda)]}{\mu-\lambda}\,. 
  \end{equation}
 Using \eqref{WDef},  \eqref{KerNormalLambda_a} and  \eqref{HhakLambda}, one has
  \begin{equation}
		\{h_b(\mu),W(\lambda)\}=
		\sum_{j=1}^{N} \{h_b(\mu),u^j(\lambda)\}e_j 
      =      	\sum_{j=1}^{N}                    (\{H(\mu),u^j(\lambda)\}|\Lambda_{m_b}(\mu))e_j\,.  \end{equation}
By noticing that  \begin{align}\label{eq:HU}
		\{H(\mu),u^j(\lambda)\}&=\{e^{-{\rm ad}_{U(\mu)}}W(\mu),u^j(\lambda)\} \nonumber \\ &=[S_{j}(\mu,\lambda),\Lambda(\mu)+H(\mu)]-e^{-{\rm ad}_{U(\mu)}}\{u^j(\lambda),W(\mu)\}\,,
              \end{align}
        and by the ad-invariance of the Cartan-Killing form and \eqref{DefBasicResolvent}, we find that
        \begin{equation}\label{eq:hbw1}
          \{h_b(\mu),W(\lambda)\}=		-\sum_{j=1}^{N}  (\{u^j(\lambda),W(\mu)\}|R_b(\mu)) e_j\,.
        \end{equation}
Here we have used Lemma \ref{FuLem} in the derivation of  \eqref{eq:HU}, and        $S_{j}(\mu,\lambda)$ is given by
        \begin{equation}
        S_{j}(\mu,\lambda):=\sum_{l\ge 0}\frac{(-{\rm ad}_{U(\mu)})^l}{(l+1)!}\{u^j(\lambda),U(\mu)\}          \,. 
        \end{equation}
	From \eqref{PoiBra2}, one can obtain
	 \begin{equation}\label{eq:ujW}
	\{u^j(\lambda),W(\mu)\}=-\frac{[e^j,W(\lambda)-W(\mu)]}{\lambda-\mu}\,,\quad j=1,\dots,N\,.
\end{equation}
Inserting \eqref{eq:ujW} into \eqref{eq:hbw1} leads to \eqref{eq:hbW2}, which completes the proof. \end{proof}

Let us now prove Theorem \ref{thm:2}.
\begin{proof}[Proof of Theorem~\ref{thm:2}]
	Applying $e^{-{\rm ad}_{U(\lambda)}}$ to \eqref{eq:ODEhamil} and using \eqref{ODE intro}, one obtains
	\begin{equation}\label{proofthm2}
		\{h_{a,k},H(\lambda)\}+\left[\sigma_{a,k}(\lambda),\Lambda(\lambda)+H(\lambda)\right]=0\,,\end{equation} 
	where
	\begin{equation}\sigma_{a,k}(\lambda):=\sum_{l\geq 0}\frac{\left(-{\rm ad}_{U(\lambda)}\right)^l}{(l+1)!}\{h_{a,k}, U(\lambda)\}-e^{-{\rm ad}_{U(\lambda)}}(\lambda^kR_a(\lambda))_+\,.
	\end{equation}
Then, one gets
\begin{equation}
	\left\{h_{a,k},h_b(\lambda)\right\}=\left(\{h_{a,k},H(\lambda)\}|\Lambda_{m_b}(\lambda)\right)=-\left(\left[\sigma_{a,k}(\lambda),\Lambda(\lambda)+H(\lambda)\right]|\Lambda_{m_b}(\lambda)\right)=0\,, 
\end{equation} 
where the first equality follows from \eqref{HhakLambda}, and the last equality relies on the ad-invariance of the Cartan-Killing form.  The theorem is proved.
\end{proof}


 Define the loop operator \begin{equation}
\nabla_a(\lambda):=\sum_{k\geq0}\frac{1}{\lambda^{k+1}}\frac{d}{d T^a_k}\,,\quad a=1,\dots,n\,.
\end{equation}
The following lemma, which is analogous to  \cite[Lemma 2.2.7]{BDY2},  will be useful in Section \ref{sec:4}. 
 \begin{lem}\label{LoopRa}We have
\begin{equation}\label{eq:loopR}
	\nabla_a(\lambda)R_b(\mu)=\frac{\left[R_a(\lambda),R_b(\mu)\right]}{\lambda-\mu}-\left[Q_a(\lambda),R_b(\mu)\right],\quad a,b=1,\dots n\,,
\end{equation}
where $Q_a(\lambda):=\left(\lambda^{-1}R_a(\lambda)\right)_+$.
\end{lem}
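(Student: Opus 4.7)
The plan is to derive the loop-operator Lax formula first for $W(\mu)$ using Theorem~\ref{thm:1}, and then transport it to $R_b(\mu)$ through the dressing $R_b(\mu)=e^{{\rm ad}_{U(\mu)}}\Lambda_{m_b}(\mu)$.

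First I would establish the algebraic identity
\[
T_a(\lambda,\mu):=\sum_{k\geq 0}\lambda^{-k-1}\bigl(\mu^k R_a(\mu)\bigr)_+=\frac{R_a(\lambda)}{\lambda-\mu}-Q_a(\lambda)
\]
by rearranging the double sum. Writing $R_a(\nu)=\sum_{j\leq 1}c^a_j\nu^j$, the bound $j\leq 1$ is inherited from $\Lambda_{m_a}(\nu)$ since $U(\nu)$ contributes only non-positive $\nu$-powers, so the only non-negative $\lambda$-power in $R_a(\lambda)/(\lambda-\mu)$ (expanded in $\lambda^{-1}$) is the constant $c^a_1=Q_a(\lambda)$; subtracting matches $T_a$ coefficient by coefficient. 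Multiplying the Lax equation of Theorem~\ref{thm:1} by $\lambda^{-k-1}$ and summing over $k\geq 0$, this gives
\[
\nabla_a(\lambda)W(\mu)=\left[\frac{R_a(\lambda)}{\lambda-\mu}-Q_a(\lambda),\,W(\mu)\right].
\]

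Next I would use the chain rule on the factorizations $W(\mu)=e^{{\rm ad}_{U(\mu)}}(\Lambda(\mu)+H(\mu))$ and $R_b(\mu)=e^{{\rm ad}_{U(\mu)}}\Lambda_{m_b}(\mu)$. Setting
\[
\phi(\lambda,\mu):=\sum_{n\geq 0}\frac{({\rm ad}_{U(\mu)})^n}{(n+1)!}\nabla_a(\lambda)U(\mu),
\]
one obtains $\nabla_a(\lambda)W(\mu)=[\phi,W(\mu)]+e^{{\rm ad}_{U(\mu)}}\nabla_a(\lambda)H(\mu)$ and $\nabla_a(\lambda)R_b(\mu)=[\phi,R_b(\mu)]$ (since $\Lambda_{m_b}(\mu)$ is constant in $\mathcal{P}$). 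Combining with the Lax form above yields $[T_a-\phi,W(\mu)]=e^{{\rm ad}_{U(\mu)}}\nabla_a(\lambda)H(\mu)$.

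The decisive step is $\nabla_a(\lambda)H(\mu)=0$. This is extracted from the proof of Theorem~\ref{thm:2}: ad-invariance there gives $(\{h_{a,k},H(\mu)\}|\Lambda_{m_b}(\mu))=\{h_{a,k},h_b(\mu)\}=0$. Since $\{h_{a,k},H(\mu)\}\in\ker({\rm ad}_{\Lambda(\mu)})$ is a combination of $\Lambda_j(\mu)$'s and the Cartan--Killing pairings \eqref{KerNormalLambda_a} with the $\Lambda_{m_b}(\mu)$ are non-degenerate, this forces $\{h_{a,k},H(\mu)\}=0$, so $\nabla_a H=0$. Therefore $T_a-\phi\in\ker({\rm ad}_{W(\mu)})=e^{{\rm ad}_{U(\mu)}}\ker({\rm ad}_{\Lambda(\mu)})$, an abelian subspace spanned by $\{\mu^k R_c(\mu)\}$ which commutes with $R_b(\mu)$ by \eqref{RRcommutator}. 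Hence $[T_a-\phi,R_b(\mu)]=0$, and
\[
\nabla_a(\lambda)R_b(\mu)=[\phi,R_b(\mu)]=[T_a,R_b(\mu)]=\frac{[R_a(\lambda),R_b(\mu)]}{\lambda-\mu}-[Q_a(\lambda),R_b(\mu)].
\]
The main obstacle is the step $\nabla_a H=0$, which requires invoking the full strength of the argument inside Theorem~\ref{thm:2}'s proof rather than its literal $k,l\geq 0$ statement.
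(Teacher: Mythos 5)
Your proposal is correct and is essentially the paper's own argument transported through the dressing $e^{{\rm ad}_{U(\mu)}}$: coefficientwise in $\lambda$, your $T_a-\phi$ is exactly $-e^{{\rm ad}_{U(\mu)}}\sigma_{a,k}(\mu)$, your key step $\nabla_a(\lambda)H(\mu)=0$ is the paper's passage from \eqref{proofthm2} to \eqref{S_2LH} (and you rightly observe it needs $\{h_{a,k},h_{b,-1}\}=0$, which the proof of Theorem~\ref{thm:2} supplies through the generating-series identity $\{h_{a,k},h_b(\lambda)\}=0$), and your centralizer claim $\ker{\rm ad}_{W(\mu)}=e^{{\rm ad}_{U(\mu)}}\ker{\rm ad}_{\Lambda(\mu)}$ is the conjugated form of the paper's BDY2-style deduction that $\sigma_{a,k}(\lambda)\in\mathcal{P}\otimes\ker{\rm ad}_{\Lambda(\lambda)}$. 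The remaining ingredients --- abelianness of the kernel via \eqref{RRcommutator} and the resummation identity $\sum_{k\geq 0}\lambda^{-k-1}\bigl(\mu^kR_a(\mu)\bigr)_+=\frac{R_a(\lambda)}{\lambda-\mu}-Q_a(\lambda)$, which you verify correctly --- coincide with the paper's final steps from \eqref{RRZoreCur} to \eqref{eq:loopR}.
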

\begin{proof}
  Using \eqref{proofthm2} and Theorem \ref{thm:2}, we have
  \begin{equation}\label{S_2LH}
    \left[\sigma_{a,k}(\lambda),\Lambda(\lambda)+H(\lambda)\right]=0\,,\quad a=1,\dots,n,\, k\geq0 \,.
  \end{equation}
Using an argument similar to the one in \cite{BDY2}, it follows from \eqref{S_2LH} that  $\sigma_{a,k}(\lambda)\in\mathcal{P}\otimes{\rm Ker}\,{\rm ad}_{\Lambda(\lambda)}$.  
Then, 
  \begin{equation}
	\left[\sigma_{a,k}(\lambda),\Lambda_{m_b}(\lambda)\right]+\{h_{a,k},\Lambda_{m_b}(\lambda)\}=0\,,\quad  a,b=1,\dots,n,\,k\geq0\,. 
\end{equation}
Applying $e^{\text{ad}_{U(\lambda)}}$ to both sides of the above equation, one obtains
	\begin{equation}\label{RRZoreCur}
		\left\{h_{a,k},R_b(\lambda)\right\}=\left[\left(\lambda^kR_a(\lambda)\right)_+,R_b(\lambda)\right]\,.
	\end{equation}
 This is equivalent to \eqref{eq:loopR}.
\end{proof}

\begin{rmk}
	We note that $Q_a(\lambda)\in\mathcal{P}\otimes\g$, namely, $ Q_a(\lambda)$ is independent of $\lambda$.
\end{rmk}

\section{$\tau$-structure of the ODE system~\eqref{ODE intro}}
\label{sec:4}
In this section, we provide the detailed construction of $\tau$-functions for the  ODE system~\eqref{ODE intro}, along 
the line similar to that for the Drinfeld--Sokolov hierarchy~\cite{BDY2}.

Let us start with proving that $\omega_{a,k;b,l}$ are indeed well defined through \eqref{ODE_tauStucture}. 
Substituting the following equality
\begin{equation}
R_b(\mu)=R_b(\lambda)+R_b'(\lambda)(\mu-\lambda)+(\mu-\lambda)^2\partial_\lambda\biggl(\frac{R_b(\lambda)-R_b(\mu)}{\lambda-\mu}\biggr)
\end{equation}
into the right-hand side of \eqref{ODE_tauStucture}, we obtain
 \begin{align}
   &\frac{(R_a(\lambda)|R_b(\mu))}{(\lambda-\mu)^2} -\eta_{ab}\frac{m_a\lambda+m_b\mu}{(\lambda-\mu)^2} \nonumber\\
   =&\frac{\eta_{ab}h\lambda}{(\lambda-\mu)^2}-\frac{(R_a(\lambda)|R'_b(\lambda))}{\lambda-\mu}+\biggl(R_a(\lambda) \Big|\partial_\lambda\biggl(\frac{R_b(\lambda)-R_b(\mu)}{\lambda-\mu}\biggr)\biggr)-\eta_{ab}\frac{m_a\lambda+m_b\mu}{(\lambda-\mu)^2}\,. \label{rexpandsub}
\end{align}
Introduce 
$$F(t,\lambda)=\Bigl(e^{t{\rm ad}_{U(\lambda)}}\Lambda_{m_a}(\lambda)\Big| \bigl(e^{t{\rm ad}_{U(\lambda)}}\Lambda_{m_b}(\lambda)\bigr)' \Bigr)\,,$$
where prime, ``$\,'\,$", denotes the derivative with respect to $\lambda$. By a direct computation, one 
can prove that $\frac{d F(t,\lambda)}{dt}\equiv 0$. 
Therefore, 
\begin{equation}\label{rarbprime}
(R_a(\lambda)|R_b'(\lambda))=F(1,\lambda)=F(0,\lambda)=\left(\Lambda_{m_a}(\lambda)|\Lambda_{m_b}'(\lambda)\right)=\eta_{ab}m_b\,.
\end{equation}
Formulae \eqref{rarbprime} and~\eqref{rexpandsub} lead to the well-definedness ({\it cf}.~\cite{BDY2}).

We have the following lemma, as claimed in Introduction. 
\begin{lem}\label{lemmaOmegaSys}
  The polynomials $\omega_{a,k;b,l}$ satisfy the identity \eqref{OmegaSys}.
\end{lem}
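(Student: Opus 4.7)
I would prove the two identities of \eqref{OmegaSys} separately, both reducing to ad-invariance of the Cartan--Killing form. Write $\Omega_{ab}(\lambda,\mu)$ for the right-hand side of \eqref{ODE_tauStucture}, so that $\sum_{k,l\ge 0}\omega_{a,k;b,l}\lambda^{-k-1}\mu^{-l-1}=\Omega_{ab}(\lambda,\mu)$.

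The symmetry $\omega_{a,k;b,l}=\omega_{b,l;a,k}$ is immediate: $\Omega_{ab}(\lambda,\mu)=\Omega_{ba}(\mu,\lambda)$ because the inner product is symmetric, $\eta_{ab}=\eta_{ba}$, and $m_a\lambda+m_b\mu$ is invariant under the simultaneous swap $(a,\lambda)\leftrightarrow(b,\mu)$. The regularised expansion \eqref{rexpandsub} inherits this symmetry, and matching coefficients of $\lambda^{-k-1}\mu^{-l-1}$ on the two sides gives the swap of indices.

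For the derivative identity, I would apply the loop operator $\nabla_c(\nu)$ to $\Omega_{ab}(\lambda,\mu)$. The subtracted term in \eqref{ODE_tauStucture} carries no $\mathbf{T}$-dependence and is killed. For the remaining piece, Leibniz together with Lemma~\ref{LoopRa} yields
\begin{align*}
\nabla_c(\nu)(R_a(\lambda)|R_b(\mu))
&=\left(\tfrac{[R_c(\nu),R_a(\lambda)]}{\nu-\lambda}-[Q_c(\nu),R_a(\lambda)]\,\Big|\,R_b(\mu)\right)\\
&\quad+\left(R_a(\lambda)\,\Big|\,\tfrac{[R_c(\nu),R_b(\mu)]}{\nu-\mu}-[Q_c(\nu),R_b(\mu)]\right).
\end{align*}
Ad-invariance $([X,Y]|Z)=(X|[Y,Z])$ makes the two $Q_c(\nu)$ contributions cancel pairwise. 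Combining the surviving $R_c(\nu)$ pieces via $\tfrac{1}{\nu-\lambda}-\tfrac{1}{\nu-\mu}=\tfrac{\lambda-\mu}{(\nu-\lambda)(\nu-\mu)}$ and dividing by $(\lambda-\mu)^2$ gives
\[
\nabla_c(\nu)\,\Omega_{ab}(\lambda,\mu)=\frac{(R_c(\nu)\,|\,[R_a(\lambda),R_b(\mu)])}{(\lambda-\mu)(\nu-\lambda)(\nu-\mu)}.
\]

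The numerator vanishes at each of $\lambda=\mu$, $\nu=\lambda$, $\nu=\mu$ (use \eqref{RRcommutator} together with ad-invariance), so this quotient is an unambiguous formal series in $\lambda^{-1},\mu^{-1},\nu^{-1}$. Moreover it is invariant under every permutation of the triples $(a,\lambda)$, $(b,\mu)$, $(c,\nu)$: any transposition flips exactly one linear factor in the denominator, while a single application of ad-invariance together with antisymmetry of the bracket flips exactly one sign in the numerator, and the two sign changes cancel. Running the same computation with $\nabla_b(\mu)$ applied to $\Omega_{ca}(\nu,\lambda)$ and with $\nabla_a(\lambda)$ applied to $\Omega_{bc}(\mu,\nu)$ therefore produces the same closed form, and extracting the coefficient of $\lambda^{-k-1}\mu^{-l-1}\nu^{-m-1}$ yields the chain \eqref{OmegaSys}. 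The main obstacle is sign bookkeeping: verifying the $Q_c(\nu)$ cancellation and the three-way permutation symmetry through the repeated use of ad-invariance.
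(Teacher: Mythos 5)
Your proposal is correct and takes essentially the same route as the paper, whose entire proof consists of applying Lemma~\ref{LoopRa} to the definition \eqref{ODE_tauStucture}; your computation (the $Q_c(\nu)$-terms cancelling by ad-invariance, giving $\nabla_c(\nu)\Omega_{ab}(\lambda,\mu)=\bigl(R_c(\nu)\big|[R_a(\lambda),R_b(\mu)]\bigr)/\bigl((\lambda-\mu)(\nu-\lambda)(\nu-\mu)\bigr)$, symmetric under permuting the three pairs, with regularity at the diagonals via \eqref{RRcommutator}) supplies exactly the details the paper leaves to the reader. One cosmetic slip: under the transposition $(b,\mu)\leftrightarrow(c,\nu)$ all three denominator factors change sign rather than exactly one, but the count is still odd, so your parity argument (and hence the proof) is unaffected.
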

\begin{proof}
Using Lemma \ref{LoopRa} and the definition \eqref{ODE_tauStucture}, one can deduce 
 \eqref{OmegaSys}.
\end{proof}
\begin{rmk}
 It can be proved that 
	\begin{equation}
		\left\{h_{a,-1},R_b(\lambda)\right\}=\left[Q_a(\lambda),R_b(\lambda)\right]\,,\quad a=1,\dots,n\,.
\end{equation}
Then by a direct computation one can show that 
\begin{equation}
\{h_{c,-1}, \, \omega_{a,k;b,l}\}
=0\,, \quad a,b,c=1,\dots,n,\,k,l\geq 0\,.
\end{equation}
Thus, the hamiltonian flows given by $h_{a,-1}$, $a=1,\dots,n$,  are not included in the 
definition of the $\tau$-structure.
\end{rmk}

Using the $\tau$-structure $\omega_{a,k;b,l}$, for any solution $W(\lambda,\mathbf{T})$ to \eqref{ODE intro}, 
we can define the $\tau$-function of the solution $W(\lambda,\mathbf{T})$ using~\eqref{deftau} 
as given in Introduction. 

For $m\geq3$, define 
\begin{equation}
\omega_{a_1,k_1;\dots;a_m,k_m}:=\frac{d^m\log\tau}{dT^{a_1}_{k_1}\dots d T^{a_m}_{k_m}}\,,\quad a_1,\dots,a_m=1,\dots,n,\,k_1,\dots,k_m\geq 0\,,
\end{equation}
and denote by $B$ the following multi-linear form: 
\begin{equation}
  \label{mLinearFormB}
B(x_1,\dots,x_m):={\rm tr}({\rm ad}_{x_1}\dots {\rm ad}_{x_m})\,, \quad x_1,\dots,x_m\in\g\,.
\end{equation}

Using Lemma \ref{LoopRa}, we can prove the following proposition.
 \begin{Pro}\label{Ncorelater}
For $m\geq2$, the following identity holds true: 
\begin{align}
		\sum_{k_1,\dots,k_m\geq 0}\frac{\omega_{a_1,k_1;\dots;a_m,k_m}}{\lambda_1^{k_1+1}\cdots\lambda_m^{k_m+1}}=-\frac{1}{2mh^{\vee}}\sum_{s\in S_m}\frac{B\bigl(R_{a_{s_1}}(\lambda_{s_1}),\dots,R_{a_{s_m}}(\lambda_{s_m})\bigr)}{\prod_{i=1}^{m}(\lambda_{s_i}-\lambda_{s_{i+1}})}\nonumber\\
		-\delta_{m,2}\eta_{a_1a_2}\frac{m_{a_1}\lambda_1+m_{a_2}\lambda_2}{(\lambda_1-\lambda_2)^2}\,,\label{mPointCorrelator}
\end{align}
where $s_{m+1}$ is understood as $s_1$. 
\end{Pro}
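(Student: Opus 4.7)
My plan is to prove \eqref{mPointCorrelator} by induction on $m\geq 2$. The base case $m=2$ follows directly from the definition \eqref{ODE_tauStucture}: since the normalization of the Cartan--Killing form gives $B(x,y)={\rm tr}({\rm ad}_x{\rm ad}_y)=2h^{\vee}(x|y)$, the two terms of the sum over $S_2$ coincide by symmetry of $B$ and combine with the prefactor $-\frac{1}{4h^{\vee}}$ to reproduce $(R_{a_1}(\lambda_1)|R_{a_2}(\lambda_2))/(\lambda_1-\lambda_2)^2$; subtracting the $\delta_{2,2}$ correction then matches \eqref{ODE_tauStucture} precisely.

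For the inductive step, I apply the loop operator $\nabla_{a_{m+1}}(\lambda_{m+1})$ to both sides of \eqref{mPointCorrelator} (at level $m$). By construction of $\omega_{a_1,k_1;\dots;a_{m+1},k_{m+1}}$ as an $(m+1)$-fold derivative of $\log\tau$, the left-hand side becomes the desired $(m+1)$-point generating series; the $\delta_{m,2}$ correction is $\mathbf{T}$-independent and is annihilated. On the right-hand side, Lemma~\ref{LoopRa} applied inside each $B$ gives, for each factor,
\[ \nabla_{a_{m+1}}(\lambda_{m+1})R_{a_{s_j}}(\lambda_{s_j})=\frac{[R_{a_{m+1}}(\lambda_{m+1}),R_{a_{s_j}}(\lambda_{s_j})]}{\lambda_{m+1}-\lambda_{s_j}}-[Q_{a_{m+1}}(\lambda_{m+1}),R_{a_{s_j}}(\lambda_{s_j})]. \]
The contributions from the $Q$-pieces cancel identically because $B$ is ${\rm ad}$-invariant in the sense that $\sum_{j=1}^{m}B(x_1,\dots,[y,x_j],\dots,x_m)=0$, which follows from ${\rm ad}_{[y,x]}=[{\rm ad}_y,{\rm ad}_x]$ and the cyclicity of the trace of a product of endomorphisms.

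The remaining commutator pieces must then be reassembled into the symmetric sum over $S_{m+1}$. I use the identity
\[ B([y,x_{s_j}],x_{s_{j+1}},\dots,x_{s_{j-1}})=B(y,x_{s_j},x_{s_{j+1}},\dots,x_{s_{j-1}})-B(x_{s_j},y,x_{s_{j+1}},\dots,x_{s_{j-1}}), \]
which splits each term into two ``insertions'' of $R_{a_{m+1}}(\lambda_{m+1})$ into the cyclic word, and couple it with the partial-fraction decomposition that merges $\frac{1}{\lambda_{m+1}-\lambda_{s_j}}$ with the adjacent $m$-cycle denominator to produce the $(m+1)$-cycle product $\prod_{i=1}^{m+1}(\lambda_{s_i}-\lambda_{s_{i+1}})$. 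Summing over $j=1,\dots,m$ and $s\in S_m$ then regenerates the full sum over $S_{m+1}$, while the prefactor shifts from $\frac{1}{2mh^{\vee}}$ to $\frac{1}{2(m+1)h^{\vee}}$ because each $(m+1)$-cycle is produced multiple times through equivalent insertion positions.

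The principal obstacle is this last combinatorial step: carefully tracking the signs, cyclic equivalences of $B$, and partial-fraction coefficients to confirm that the rearrangement yields exactly the symmetric sum over $S_{m+1}$ with the claimed normalization. This bookkeeping parallels the analogous multi-point correlator computation for the Drinfeld--Sokolov hierarchy in~\cite{BDY2}, and can be adapted essentially verbatim to the present ODE setting.
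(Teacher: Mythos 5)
Your proposal is correct and follows essentially the same route as the paper: the authors prove Proposition~\ref{Ncorelater} precisely by induction via the loop-operator identity of Lemma~\ref{LoopRa}, with the base case $m=2$ being the definition \eqref{ODE_tauStucture} (using $B(x,y)=2h^{\vee}(x|y)$), the $Q$-terms cancelling by ${\rm ad}$-invariance of $B$, and the commutator terms reassembled by partial fractions exactly as in the analogous multipoint-correlator computation of~\cite{BDY2}, to which the paper defers the bookkeeping just as you do.
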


For an arbitrary solution $W(\lambda,\mathbf{T})$ to the  ODE system~\eqref{ODE intro}, both sides of the identity~\eqref{mPointCorrelator} can  be regarded as elements in 
	$\mathbb{C}[[\mathbf{T}]][[\lambda_1^{-1},\dots,\lambda_m^{-1}]]$. 
	Then by taking $\mathbf{T}=\mathbf{0}$, we obtain the following corollary.
\begin{col}\label{TaylorTauODE}
For $m\geq 2$, the generating series of $m$-th order Taylor coefficients of $\log\tau({\bf T})$ has the explicit expression: 
	\begin{align}
			\sum_{k_1,\dots,k_m\geq 0}\frac{\frac{d^m\log\tau}{dT^{a_1}_{k_1}\dots d T^{a_m}_{k_m}}(\mathbf{0})}{\lambda_1^{k_1+1}\cdots\lambda_m^{k_m+1}}=-\frac{1}{2mh^{\vee}}\sum_{s\in S_m}\frac{B(R_{a_{s_1}}(\lambda_{s_1},\mathbf{0}),\dots,R_{a_{s_m}}(\lambda_{s_m},\mathbf{0}))}{\prod_{i=1}^{m}(\lambda_{s_i}-\lambda_{s_{i+1}})} \nonumber\\
			-\delta_{m,2}\eta_{a_1a_2}\frac{m_{a_1}\lambda_1+m_{a_2}\lambda_2}{(\lambda_1-\lambda_2)^2}\,.\label{mtauPointCorrelator}
	\end{align}
\end{col}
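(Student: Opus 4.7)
The plan is to proceed by induction on $m$. The base case $m=2$ is the defining identity \eqref{ODE_tauStucture}: using the symmetry $B(x,y)=B(y,x)$ together with $B(x,y)=2h^{\vee}(x|y)$ (this is just the definition of the normalized Cartan--Killing form), the $S_2$-sum on the right-hand side of \eqref{mPointCorrelator} collapses to $-2B(R_{a_1}(\lambda_1),R_{a_2}(\lambda_2))/(\lambda_1-\lambda_2)^2$, which matches \eqref{ODE_tauStucture} after multiplying by the prefactor $-1/(2\cdot 2\cdot h^{\vee})$.

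For the inductive step from $m-1$ to $m$ (with $m\geq 3$), I would use the identity $\omega_{a_1,k_1;\dots;a_m,k_m}=\frac{d}{dT^{a_m}_{k_m}}\omega_{a_1,k_1;\dots;a_{m-1},k_{m-1}}$, which at the level of generating series amounts to applying the loop operator $\nabla_{a_m}(\lambda_m)$ to the $(m-1)$-point identity. The boundary term $-\delta_{m,2}\eta_{ab}(m_a\lambda+m_b\mu)/(\lambda-\mu)^2$ is $\mathbf{T}$-independent and is killed by $\nabla_{a_m}(\lambda_m)$. By Lemma~\ref{LoopRa},
\begin{equation*}
\nabla_{a_m}(\lambda_m)R_{a_{s_j}}(\lambda_{s_j})=\frac{[R_{a_m}(\lambda_m),R_{a_{s_j}}(\lambda_{s_j})]}{\lambda_m-\lambda_{s_j}}-[Q_{a_m}(\lambda_m),R_{a_{s_j}}(\lambda_{s_j})]\,,
\end{equation*}
so the Leibniz expansion of $\nabla_{a_m}(\lambda_m)$ inside the multilinear form $B$ produces ``commutator pieces'' and ``$Q$-pieces''.

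First I would show that the $Q$-pieces cancel when summed over the insertion position $j$. Using $\mathrm{ad}_{[X,Y]}=[\mathrm{ad}_X,\mathrm{ad}_Y]$, the $j$-th $Q$-contribution equals $\mathrm{tr}(\mathrm{ad}_{R_{a_{s_1}}}\cdots[\mathrm{ad}_{Q_{a_m}},\mathrm{ad}_{R_{a_{s_j}}}]\cdots\mathrm{ad}_{R_{a_{s_{m-1}}}})$, and the sum over $j=1,\dots,m-1$ telescopes to $\mathrm{tr}([\mathrm{ad}_{Q_{a_m}},\mathrm{ad}_{R_{a_{s_1}}}\cdots\mathrm{ad}_{R_{a_{s_{m-1}}}}])$, which vanishes by cyclicity of the trace. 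For the commutator pieces, I would expand $B(\dots,[R_{a_m},R_{a_{s_j}}],\dots)$ as a difference of two length-$m$ multilinear forms in which $R_{a_m}$ is placed immediately before or immediately after $R_{a_{s_j}}$, and then combine these with the partial-fraction identity
\begin{equation*}
\frac{1}{(\lambda_{s_{j-1}}-\lambda_{s_j})(\lambda_m-\lambda_{s_j})}=\frac{1}{\lambda_{s_{j-1}}-\lambda_m}\Bigl(\frac{1}{\lambda_m-\lambda_{s_j}}-\frac{1}{\lambda_{s_{j-1}}-\lambda_{s_j}}\Bigr)
\end{equation*}
(and its mirror for the opposite neighbour) to recast the denominators in the form required by the $m$-point formula.

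The main obstacle will be the combinatorial bookkeeping required to check that summing over all $\sigma\in S_{m-1}$ and all $m-1$ insertion positions $j$ reproduces exactly the sum over $S_m$, and that the prefactor $-\frac{1}{2(m-1)h^{\vee}}$ from the inductive hypothesis upgrades to $-\frac{1}{2mh^{\vee}}$ in the $m$-case. The extra factor of $\frac{m-1}{m}$ should emerge from the $m$-fold cyclic symmetry of each summand on the right-hand side of \eqref{mPointCorrelator}, combined with the fact that each target $m$-cycle is produced twice, with opposite signs, by the inductive splitting (once from each of its two neighbours of $\lambda_m$). This type of loop-equation argument parallels the proof of the analogous statement for the Drinfeld--Sokolov hierarchy in~\cite{BDY2}, whose strategy I plan to follow.
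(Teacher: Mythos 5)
Your proposal is correct and takes essentially the same route as the paper: there, Corollary~\ref{TaylorTauODE} is obtained simply by evaluating Proposition~\ref{Ncorelater} at $\mathbf{T}=\mathbf{0}$ (viewing both sides in $\mathbb{C}[[\mathbf{T}]][[\lambda_1^{-1},\dots,\lambda_m^{-1}]]$), and Proposition~\ref{Ncorelater} itself is proved by exactly the loop-operator induction via Lemma~\ref{LoopRa} following \cite{BDY2} that you spell out, including the telescoping cancellation of the $Q_a$-pieces and the partial-fraction recombination of the commutator pieces. Your base case $m=2$, using $B(x,y)=2h^{\vee}(x|y)$ to match the defining series \eqref{ODE_tauStucture}, is likewise the paper's starting point.
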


\section{Proof of Theorem~\ref{thm:3}}\label{section4}
\subsection{Review of the Drinfeld--Sokolov hierarchy}
In this subsection, let us recall the construction of a hierarchy of integrable PDEs given by Drinfeld and Sokolov \cite{DS}.

Denote by $\mathfrak{b}=\g^{\leq0}$ the Borel subalgebra of $\g$, and $\mathfrak{n}=\g^{<0}$ the nilpotent subalgebra.  The Lax operator of {\em pre-Drinfeld--Sokolov hierarchy} introduced in \cite{DS} is as follows:
\begin{equation}\label{LaxOP}
	\mathcal{L}(\lambda)=\partial_x+\Lambda(\lambda)+q(x)\,,\quad q(x)\in\mathfrak{b}\,, 
      \end{equation}
      where $\Lambda(\lambda)$ is the cyclic element as given in Section \ref{sec:3}. 

 Let $\mathcal{A}^q$ be the ring of differential polynomials in $q$. Namely, elements of $\mathcal{A}^q$ are polynomials of  the entries of $q,q_x,q_{xx},\dots$. 
It was proved  in \cite{DS}  that there exsits a unique pair ($\mathcal{U}(\lambda),\mathcal{H}(\lambda))\in (\mathcal{A}^q\otimes \mathfrak{g}((\lambda^{-1})))^2$ of the form:
\begin{align}\label{UHformPDE}
		\mathcal{U}(\lambda)&=\sum_{k\geq 1}\mathcal{U}^{[-k]}(\lambda),\quad \mathcal{U}^{[-k]}(\lambda)\in \mathcal{A}^q\otimes{\left(\rm Im\,ad_{\Lambda(\lambda)}\right)}^{-k}\,, \\
		\mathcal{H}(\lambda)&=\sum_{k\geq0}\mathcal{H}^{[-k]}(\lambda),\quad \mathcal{H}^{[-k]}(\lambda)\in \mathcal{A}^q\otimes{\left(\rm Ker\,ad_{\Lambda(\lambda)}\right)}^{-k}\,,
\end{align}
such that
\begin{equation}\label{PDEflemmeq}
	e^{-{\rm ad}_\mathcal{U(\lambda)}}\mathcal{L}(\lambda)=\partial_x+\Lambda(\lambda)+\mathcal{H}(\lambda)\,.
\end{equation}

Let 
 \begin{equation}\label{PDEresolvent}
R^{PDE}_a(\lambda):=e^{{\rm ad}_\mathcal{U(\lambda)}}\Lambda_{m_a}(\lambda)\,, \quad a=1,\dots,n\,.
\end{equation}
Recall that the pre-Drinfeld--Sokolov hierarchy \cite{DS} is a  commuting family of PDEs for the $\mathfrak{b}$-valued function $q=q(x,\mathbf{T})$, $\mathbf{T}=(T^a_k)^{a=1,\dots,n}_{k\geq0}$, defined by 
\begin{equation}\label{preDSeq}
\frac{\partial \mathcal{L}(\lambda)}{\partial T^a_k}=\left[(\lambda^kR^{PDE}_a(\lambda))_+,\mathcal{L}(\lambda)\right]\,, \quad a=1,\dots,n,\,k\geq0 \,. 
\end{equation}
As in \cite{BDY2}, we call $R^{PDE}_a(\lambda)$ the {\it basic $\g$-resolvents} for the pre-Drinfeld--Sokolov hierarchy.

The $\tau$-structure \cite{BDY2} ({\it cf}. \cite{DLZ, DZ-norm}) $\Omega_{a,k;b,l}\in \mathcal{A}^q$, $a,b=1,\dots,n,\, k,l\geq0$, of pre-Drinfeld--Sokolov hierarchy can be defined via
\begin{equation}
	\sum_{k,l\geq 0}\frac{\Omega_{a,k;b,l}}{\lambda^{k+1}\mu^{l+1}}=\frac{\left(R^{PDE}_a(\lambda)|R^{PDE}_b(\mu)\right)}{\left(\lambda-\mu\right)^2}-\eta_{ab}\frac{m_a\lambda+m_b\mu}{\left(\lambda-\mu\right)^2}\,.
\end{equation} 
It has the following properties:
\begin{equation}
	\Omega_{a,k;b,l}=\Omega_{b,l;a,k},~~\frac{\partial\Omega_{a,k;b,l}}{\partial  T^c_m}=\frac{\partial\Omega_{c,m;a,k}}{\partial  T^b_l}=\frac{\partial\Omega_{b,l;c,m}}{\partial  T^a_k}\,,~~ a,b,c=1,\dots,n, \, k,l,m\geq 0\,.
\end{equation}
 For any solution $q(x,\mathbf{T})$ of \eqref{preDSeq}, there exists \cite{BDY2} a function $\tau(x,\mathbf{T})$ satisfying \begin{equation}\label{PDEtau}
	\frac{\partial^2\log\tau(x,\mathbf{T})}{\partial T^a_k\partial T^b_l}=\Omega_{a,k;b,l}(x,\mathbf{T}),\quad\frac{\partial \log\tau(x,\mathbf{T})}{\partial{T^1_0}}=-\frac{\partial\log\tau(x,\mathbf{T})}{\partial x}\,,
\end{equation}
by which the function $\tau(x,\mathbf{T})$ can be determied up to a factor as in \eqref{tauFactor}. In view of \eqref{PDEtau}, we can identify $x$ with $-T^1_0$, and use a shorter notation $\tau(\mathbf{T})$. This scalar function $\tau(\mathbf{T})$ is called {\it the $\tau$-function of $q(x,\mathbf{T})$ to the pre-Drinfeld--Sokolov hierarchy}. 

Let $\mathcal{N}(x)\in\mathfrak{n}$ be a function with its values in the nilpotent subalgebra. The gauge transformation $q(x)\mapsto \tilde{q}(x)$ is defined by \begin{equation}\label{eq:gauge11}
	\mathcal{L}(\lambda)~\mapsto~\tilde{\mathcal{L}}(\lambda)=e^{{\rm ad}_{\mathcal{N}(x)}}\mathcal{L}(\lambda)=\partial_x+\Lambda(\lambda)+\tilde{q}(x)\,.
\end{equation}
The following facts were proved   in \cite{DS, BDY2}:
\begin{itemize}
	\item Each gauge transformation is a symmetry of pre-Drinfeld--Sokolov hierarchy.
	\item The Drinfeld--Sokolov hierarchy can be obtained from \eqref{preDSeq} by a reduction via the gauge transformation. 
	\item The $\tau$-structure of the pre-Drinfeld--Sokolov hierarchy is invariant with respect to the gauge transformation.
\end{itemize}
Based on the above points, one can prove that the scalar function $\tau(\mathbf{T})$ (defined for the pre-Drinfeld--Sokolov hierarchy) is also a  $\tau$-function for the Drinfeld--Sokolov hierarchy \cite{BDY2}. The following lemma gives an algorithm for computing the Taylor coefficients  of $\log\tau(\mathbf{T})$ at $\mathbf{T}=\mathbf{0}$.
\begin{lem}[\cite{BDY2}]\label{TaylorTauPDE}
	Let $q(x,\mathbf{T})$ be an arbitrary solution of the pre-Drinfeld--Sokolov hierarchy, and let $\tau(\mathbf{T})$ be the $\tau$-function of this solution.  The following identity holds true for $m\geq 2$, $a_1,\dots,a_m=1,\dots,n$:
			\begin{align}
				\sum_{k_1,\dots,k_m\geq 0}\frac{\frac{\partial ^m\log\tau}{\partial T^{a_1}_{k_1}\dots\partial T^{a_m}_{k_m}}(\mathbf{0})}{\lambda_1^{k_1+1}\cdots\lambda_m^{k_m+1}}=-\frac{1}{2mh^{\vee}}\sum_{s\in S_m}\frac{B\big(R^{PDE}_{a_{s_1}}(\lambda_{s_1},\mathbf{0}),\dots,R^{PDE}_{a_{s_m}}(\lambda_{s_m},\mathbf{0})\big)}{\prod_{i=1}^{m}\left(\lambda_{s_i}-\lambda_{s_{i+1}}\right)}\nonumber\\
				-\delta_{m,2}\eta_{a_1a_2}\frac{m_{a_1}\lambda_1+m_{a_2}\lambda_2}{\left(\lambda_1-\lambda_2\right)^2}\,.
			\end{align}
Here, $s_{m+1}=s_1$, $R^{PDE}_a(\lambda,\mathbf{T})$ are the basic $\g$-resolvents in \eqref{PDEresolvent}, and $B$ is the $m$-linear form on $\g$ given in \eqref{mLinearFormB}. 
\end{lem}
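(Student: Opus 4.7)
The plan is to proceed by induction on $m$. The base case $m=2$ is essentially a rewriting of the defining generating series of $\Omega_{a,k;b,l}$: the two permutations in $S_2$ contribute identically to $B(R_{a_1}^{PDE}(\lambda_1),R_{a_2}^{PDE}(\lambda_2))/[(\lambda_1-\lambda_2)(\lambda_2-\lambda_1)]$, so the prefactor $-1/(4h^{\vee})$ produces $B(R_{a_1}^{PDE},R_{a_2}^{PDE})/[2h^{\vee}(\lambda_1-\lambda_2)^2]$, which equals $(R_{a_1}^{PDE}|R_{a_2}^{PDE})/(\lambda_1-\lambda_2)^2$ via the normalization $(x|y)=B(x,y)/(2h^{\vee})$ of the Cartan--Killing form. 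Combined with $\frac{\partial^2\log\tau}{\partial T^{a_1}_{k_1}\partial T^{a_2}_{k_2}}(\mathbf{0}) = \Omega_{a_1,k_1;a_2,k_2}(\mathbf{0})$, this settles $m=2$.

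For the inductive step, apply the loop operator $\nabla_{a_m}(\lambda_m)$ to the $(m-1)$-point formula and invoke the PDE analog of Lemma~\ref{LoopRa} (proved in~\cite{BDY2}):
$$\nabla_c(\nu)R_a^{PDE}(\lambda)=\frac{[R_c^{PDE}(\nu),R_a^{PDE}(\lambda)]}{\nu-\lambda}-[Q_c^{PDE},R_a^{PDE}(\lambda)].$$
By the Leibniz rule, differentiating $B(R_{a_{s_1}}^{PDE}(\lambda_{s_1}),\dots,R_{a_{s_{m-1}}}^{PDE}(\lambda_{s_{m-1}}))$ produces two families of terms. The $Q_{a_m}^{PDE}$-contributions cancel after summation because $\text{ad}_{[Y,X]}=[\text{ad}_Y,\text{ad}_X]$ yields
$$\sum_{i=1}^{m-1}B(X_1,\dots,[Y,X_i],\dots,X_{m-1})=\text{tr}\,[\text{ad}_Y,\text{ad}_{X_1}\cdots\text{ad}_{X_{m-1}}]=0$$
by cyclicity of the trace, applied with $Y=Q_{a_m}^{PDE}$ (which is independent of $\lambda_m$). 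Only the commutator terms $[R_{a_m}^{PDE}(\lambda_m),R_{a_{s_i}}^{PDE}(\lambda_{s_i})]/(\lambda_m-\lambda_{s_i})$ survive.

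Expanding each surviving commutator via $B(\dots,[Y,X_i],\dots)=B(\dots,Y,X_i,\dots)-B(\dots,X_i,Y,\dots)$ amounts to inserting $R_{a_m}^{PDE}(\lambda_m)$ at one of the cyclic slots of the $B$-argument list. The partial-fraction identity
$$\frac{1}{(\lambda_{s_{i-1}}-\lambda_m)(\lambda_m-\lambda_{s_i})}=\frac{1}{\lambda_{s_{i-1}}-\lambda_{s_i}}\left[\frac{1}{\lambda_{s_{i-1}}-\lambda_m}+\frac{1}{\lambda_m-\lambda_{s_i}}\right]$$
then converts the chain-type denominator inherited from the inductive hypothesis, combined with the new factor $1/(\lambda_m-\lambda_{s_i})$, into a closed cyclic denominator $\prod_{i=1}^{m}(\lambda_{s_i}-\lambda_{s_{i+1}})$ running over permutations in $S_m$. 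The main obstacle will be the combinatorial bookkeeping: one must verify that the $(m-1)!\cdot 2(m-1)$ signed insertion terms reorganize, via iterated use of the above partial-fraction identity, into precisely the $m!$ terms of the $S_m$-sum, while the overall coefficient adjusts from $-1/[2(m-1)h^{\vee}]$ to $-1/(2mh^{\vee})$ (this latter factor arising naturally because each cyclic class of length $m$ is enumerated $m$ times in $S_m$). Evaluating at $\mathbf{T}=\mathbf{0}$ then yields the stated identity.
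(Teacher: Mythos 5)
Your proposal is correct and takes essentially the same route as the paper's source for this lemma: the paper quotes the statement from \cite{BDY2} without reproving it, and the proof there (mirrored by the paper's parallel ODE statement, Proposition~\ref{Ncorelater}, which is likewise asserted ``using Lemma~\ref{LoopRa}'') is exactly your induction on $m$ --- the loop-operator identity for the basic resolvents, cancellation of the $Q$-terms by the derivation property $\sum_i B(X_1,\dots,[Y,X_i],\dots,X_{m-1})=0$ from cyclicity of the trace, and the partial-fraction recombination of chain denominators into cyclic ones, with the coefficient passing from $-1/[2(m-1)h^{\vee}]$ to $-1/(2mh^{\vee})$ precisely because each cyclic class is counted $m$ times in $S_m$. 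Your base case $m=2$ also correctly reduces to the defining generating series of $\Omega_{a,k;b,l}$ via $B(x,y)=2h^{\vee}(x|y)$ (and note the $\delta_{m,2}$-term is $\mathbf{T}$-independent, so it is annihilated by the loop operator in the inductive step, as your argument implicitly requires).
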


\subsection{Proof of Theorem \ref{thm:3}}
In this subsection we prove Theorem \ref{thm:3}, which
establishes the relationship between $\tau$-functions for the  ODE system~\eqref{ODE intro} and those for the Drinfeld--Sokolov hierarchy.

\begin{proof}[Proof of Theorem \ref{thm:3}]
Let  $W(\lambda,\mathbf{T})$ be a solution to the  ODE system~\eqref{ODE intro}, and $\tau(\mathbf{T})$ the $\tau$-function of the solution $W(\lambda,\mathbf{T})$. Since \begin{equation}\label{R_1U}
	R_1(\lambda,\mathbf{T})=e^{{\rm ad}_{U{(\lambda,\mathbf{T})}}}\Lambda(\lambda)=\Lambda(\lambda)+\text{lower degree terms with respect to deg}\,,
\end{equation} we have
\begin{equation}
 R_1(\lambda,\mathbf{T})_+-\Lambda(\lambda)\in\mathfrak{b}[[\mathbf{T}]]\,.
\end{equation}
Here $U(\lambda,\mathbf{T})$ is given in Lemma \ref{FuLem}.
Let $q(x,\mathbf{T})$  be the unique solution to the pre-Drinfeld--Sokolov hierarchy \eqref{preDSeq} specified by the following initial data
 \begin{equation}\label{InData}
	q(x,\mathbf{T})|_{\mathbf{T}=\mathbf{0}}=R_1(\lambda;T^1_0=-x,0,0,\dots)_+-\Lambda(\lambda)\in \mathfrak{b}[[x]]\,,
\end{equation}
and ${R}^{\rm PDE}_a(\lambda,x,\mathbf{T})$, $a=1,\dots,n$, be the corresponding basic $\g$-resolvents of the pre-Drinfeld--Sokolov hierarchy.
From Lemma~\ref{LoopRa} we have
 \begin{equation}\label{dxR1}
	\frac{d R_1(\lambda,\mathbf{T})}{d T^1_0}=\left[ R_1(\lambda,\mathbf{T})_+,R_1(\lambda,\mathbf{T})\right]\,.
\end{equation}
With the help of this formula,  we can verify that \begin{equation}\label{LRad}
\left[\mathcal{L}|_{\mathbf{T}=\mathbf{0}},R_1(\lambda;T^1_0=-x,\mathbf{0})\right]=0\,,
\end{equation}
where we recall that the operator $\mathcal{L}$ is given in \eqref{LaxOP}.
Using \eqref{R_1U} and \eqref{LRad} one has \begin{equation}\label{LLambda0}
\big[e^{-{\rm ad}_{U(\lambda;-x,\mathbf{0})}}\mathcal{L}|_{\mathbf{T}=\mathbf{0}},\Lambda(\lambda)\big]=0\,.
\end{equation}
 Let us decompose $e^{-{\rm ad}_{U(\lambda;-x,\mathbf{0})}}\mathcal{L}|_{\mathbf{T}=\mathbf{0}}$ with respect to the principal gradation as follows:
\begin{equation}\label{ODEUHL}
e^{-{\rm ad}_{U(\lambda;-x,\mathbf{0})}}\mathcal{L}|_{\mathbf{T}=\mathbf{0}}=\partial_x+\Lambda(\lambda)+\sum_{k\geq 0}\mathcal{H}_k(\lambda,x),\quad \mathcal{H}_k(\lambda,x)\in \left(\mathfrak{g}((\lambda^{-1}))\right)^{-k}\,.
\end{equation}
Here we have used  the  fact that $\deg U(\lambda;-x,\mathbf{0})<0$. By \eqref{LLambda0} we have
 \begin{equation}
\mathcal{H}_k(\lambda,x)\in{\left(\rm Ker\,ad_{\Lambda(\lambda)}\right)}^{-k}\,,\quad k\geq 0\,.
\end{equation}
From Lemma \ref{FuLem} we know that
\begin{equation}\label{ODEUx}
U(\lambda;-x,\mathbf{0})=\sum_{k\geq 1}U^{[-k]}(\lambda;-x,\mathbf{0}),\quad U^{[-k]}(\lambda;-x,\mathbf{0})\in\left({\rm Im\,ad_{\Lambda(\lambda)}}\right)^{-k}\,.
\end{equation} 
 By using \eqref{ODEUHL}--\eqref{ODEUx} and the uniqueness of solution of   \eqref{UHformPDE}--\eqref{PDEflemmeq}, we have
 \begin{equation}
\mathcal{U}(\lambda,x)=U(\lambda;-x,\mathbf{0})\,.
\end{equation}
 Together with the definition \eqref{PDEresolvent} of the basic $\mathfrak{g}$-resovents of the pre-Drinfeld--Sokolov hierarchy, we have
\begin{equation}\label{RodeRpde}
R^{PDE}_a(\lambda,x)=R_a(\lambda;-x,\mathbf{0})\,, \quad a=1,\dots,n\,.
\end{equation}
Let $\tau^{PDE}(\mathbf{T})$ denote the $\tau$-function of the solution $q(x,\mathbf{T})$. The $m$-th order Taylor coeffients, $m\geq 2$,  of $\log\tau^{PDE}({\mathbf{T}})$ at $\mathbf{T}=\mathbf{0}$ are given in Lemma \ref{TaylorTauPDE}. And these coefficients are equal to those of $\log\tau(\mathbf{T})$, thanks to equaility \eqref{RodeRpde} and Corollary \ref{TaylorTauODE}.
We conclude that the function $\tau(\mathbf{T})$ equals to $\tau^{PDE}(\mathbf{T})$, up to a factor of the form  \eqref{tauFactor}. 

Conversely, let $q(x,\mathbf{T})$ be an arbitary solution to the pre-Drinfeld--Sokolov hierarchy \eqref{preDSeq} and $\tau^{PDE}(\mathbf{T})$ be its $\tau$-function. 
Let $W(\lambda,\mathbf{T})$ be a solution to the  ODE system~\eqref{ODE intro}, specified by the condition 
\begin{equation}
	W(\lambda,\mathbf{T})|_{\mathbf{T}=\mathbf{0}}=R^{PDE}_1(\lambda;x=0,{\mathbf{T}=\mathbf{0}})\,,
\end{equation} and $\tau(\mathbf{T})$ be the $\tau$-function of this solution.
For this initial data we have 
\begin{equation}
e^{-{\rm ad}_{\mathcal{U}(\lambda;x=0,\mathbf{T}=\mathbf{0})}}W(\lambda,\mathbf{0})=\Lambda(\lambda)\,.
\end{equation}
 The unequeness of solutions to \eqref{UHform-u}--\eqref{Flem} tells us
  \begin{equation}
	U(\lambda,\mathbf{0})=\mathcal{U}(\lambda;x=0,\mathbf{0})\,.
\end{equation}
Therefore,
\begin{equation}
	R_a(\lambda,\mathbf{0})=R^{PDE}_a(\lambda;x=0,\mathbf{0})\,,\quad a=1,\dots,n\,.
\end{equation}
Similar to the previous arguments, we obtain that $\tau(\mathbf{T})$ equals to $\tau^{PDE}(\mathbf{T})$, up to a factor like \eqref{tauFactor}.
\end{proof}

According to Theorem \ref{thm:3}, the functions $r_a$, $a=1,\dots,n$, defined by
 \begin{equation}
	r_a=\frac{\partial^2\log\tau(\mathbf{T})}{\partial T^{a}_0\partial T^1_0}=\omega_{a,0;1,0}\,,
\end{equation} 
satisfy 
\begin{equation}\label{DSnormalcoor}
	\frac{\partial r_a}{\partial T^b_k}=\partial_{T^1_0}\Omega_{a,0;b,k}\,,
\end{equation}
which is the Drinfeld--Sokolov hierarchy written in terms of normal coordinates \cite{BDY2,DZ-norm}.
In the next section we will interpret this by means of concrete examples.

\section{Examples}\label{sec:2}
 In this section, the  ODE systems~\eqref{ODE intro} for the $A_1$ and $A_2$ cases 
are explicitly computed, which are connected to the KdV hierarchy and the Boussinesq hierarchy respectively.

\subsection{ The infinite commuting ODE system of $A_1$-type}
The normalized Cartan--Killing form on~$\g=\mathfrak{sl}_2(\mathbb{C})$ is given by 
\begin{equation}
	\left(A|B\right)={\rm Tr}(AB),\quad A,B\in\mathfrak{sl}_2(\mathbb{C}).
\end{equation}
We take the usual basis of  $\mathfrak{sl}_2(\mathbb{C})$, which are homogeneous with respect to the principal gradation ({\it cf}. \eqref{Prin_degree}):
\begin{equation}
	e_1=\begin{pmatrix}
		0 & 1 \\
		0 & 0
	\end{pmatrix}\in \g^1,\quad e_2=\begin{pmatrix}
		1 & 0 \\
		0 & -1
	\end{pmatrix}\in \g^0, \quad e_3=\begin{pmatrix}
		0 & 0 \\
		1 & 0
	\end{pmatrix}\in\g^{-1}\,.
\end{equation}
In this case, the cyclic element $\Lambda(\lambda)$ of  $ \mathfrak{sl}_2(\mathbb{C})((\lambda^{-1}))$ is given by 
\begin{equation}
	\Lambda(\lambda)=
        \begin{pmatrix}
        	0 & 1 \\
		\lambda & 0 
        \end{pmatrix}\,,
\end{equation}
and the matrix-valued  formal Laurent series $W(\lambda)$ is given by
\begin{equation}\label{wkdv}
	W(\lambda)=
        \begin{pmatrix}
        	a_0 & 1 \\
		\lambda+c_1 & -a_0  
        \end{pmatrix}
	+\sum_{i\geq 1}
        \begin{pmatrix}
        	a_i & b_i \\
		c_{i+1} & -a_i  
        \end{pmatrix}
\lambda^{-i}\,. 
\end{equation}
Compared to \eqref{WDef},  $a_{k}=u^2_{k-1},b_{k+1}=u^1_{k},c_{k+1}=u^3_{k-1}$, $k\geq0$.

 The $r$-matrix \eqref{rmatrixDef}  reads (here an irrelevant term proportional to $I\otimes I$ is  added)    
\begin{equation}\label{StandardRA1}
	r(\lambda)=\frac{P }{\lambda}\,, \quad P(x\otimes y)=y\otimes x\,,\quad x,y\in \mathbb{C}^2\,. 
\end{equation}
The Poisson bracket \eqref{r-matixPoisson} reads 
	\begin{align}
		\left\{a(\lambda),b(\mu)\right\}=-\frac{b(\lambda)-b(\mu)}{\lambda-\mu}\,,~~ &\left\{a(\lambda),c(\mu)\right\}=\frac{c(\lambda)-c(\mu)}{\lambda-\mu}+1\,,\\
		\left\{b(\lambda),c(\mu)\right\}=-\frac{ 2a(\lambda)- 2a(\mu) }{ \lambda-\mu }\,, ~~  &\left\{a(\lambda),a(\mu)\right\}=\left\{b(\lambda),b(\mu)\right\}=\left\{c(\lambda),c(\mu)\right\}=0\,,
	\end{align}
with
 \begin{equation}
	a(\lambda)=a_0+\sum_{k\geq1}a_k\lambda^{-k}\,,\quad b(\lambda)=\sum_{k\geq1}b_k\lambda^{-k}\,,\quad c(\lambda)=\sum_{k\geq1}c_k\lambda^{-k+1}\,.
\end{equation}

	It follows from  \eqref{Sec3WhR} that the generating series of the hamiltonians  $h(\lambda)$  and basic the $\g$-resolvent $R(\lambda)$ satisfy 
	\begin{equation}
		W(\lambda)=\frac{h(\lambda)R(\lambda)}{2\lambda}\,.
	\end{equation} Thanks to $(R(\lambda)|R(\lambda))=2\lambda$ ({\it cf.}~Lemma \ref{lemRR}), we have closed forms for $h(\lambda)$ and $R(\lambda)$:
	\begin{equation}\label{WRHformula}
		h(\lambda)=\sqrt{2\lambda {\rm Tr}(W(\lambda)^2)}\,,\quad R(\lambda)=\sqrt{\frac{2\lambda}{{\rm Tr}\left(W(\lambda)^2\right)}}W(\lambda)\,.
	\end{equation}
	
	The first few hamiltonians are given by
	\begin{align*}
		h_{-1}&=a_{0}^{2}+b_{1}+c_{1},\\
		h_0&=-\frac{1}{4} a_{0}^{4}-\frac{1}{2} a_{0}^{2} b_{1}-\frac{1}{2} a_{0}^{2} c_{1}+2 a_{0} a_{1}-\frac{1}{4} b_{1}^{2}+\frac{1}{2} c_{1} b_{1}-\frac{1}{4} c_{1}^{2}+c_{2}+b_{2}\,,\\
		h_{1}&=-\frac{1}{2} b_{2} b_{1}-\frac{1}{8} c_{1}^{2} b_{1}-a_{1} a_{0}^{3}+\frac{1}{2} c_{1} b_{2}-\frac{1}{8} c_{1} b_{1}^{2}-\frac{1}{2} c_{1} c_{2}+\frac{1}{2} c_{2} b_{1}+\frac{3}{8} b_{1}^{2} a_{0}^{2}+\frac{3}{8} b_{1} a_{0}^{4}+\frac{3}{8} a_{0}^{2} c_{1}^{2}
		+\frac{3}{8} c_{1} a_{0}^{4}\\ & -\frac{1}{2} a_{0}^{2} b_{2}-\frac{1}{2} a_{0}^{2} c_{2}+2 a_{0} a_{2}-a_{0} b_{1} a_{1}+\frac{1}{4} b_{1} a_{0}^{2} c_{1}-a_{0} c_{1} a_{1}+\frac{1}{8} a_{0}^{6}+\frac{1}{8} c_{1}^{3}+a_{1}^{2}+\frac{1}{8} b_{1}^{3}+b_{3}+c_{3}\,,
	\end{align*}
	and the first few  flows are given by
	\begin{align}
		\frac{d a_0}{dT_0}&=\frac{a_{0}^{2}}{2}-\frac{b_{1}}{2}+\frac{c_{1}}{2}\,,\quad 		 \frac{d b_1}{d T_0}=2 a_{0} b_{1}-2 a_{1}\,,\quad \frac{d c_1}{d T_0}=-a_{0}^{3}-a_{0} b_{1}-a_{0} c_{1}+2 a_{1}\,, \label{DS_ODEabc0}\\
		\frac{d a_0}{dT_1}&=\frac{1}{4} c_{1} b_{1}-\frac{1}{2} b_{2}+\frac{1}{8} b_{1}^{2}+\frac{1}{2} c_{2}-\frac{3}{4} a_{0}^{2} c_{1}-\frac{3}{8} a_{0}^{4}-\frac{1}{4} a_{0}^{2} b_{1}+a_{0} a_{1}-\frac{3}{8} c_{1}^{2}\,,\\
		\frac{d b_1}{d T_1}&=-a_{0}^{3} b_{1}+a_{0}^{2} a_{1}-a_{0} b_{1}^{2}-a_{0} b_{1} c_{1}+2 a_{0} b_{2}+a_{1} b_{1}+a_{1} c_{1}-2 a_{2}\,,\\
		\frac{d c_1}{d T_1}&=-3 a_{0}^{2} a_{1}+\frac{3}{4} a_{0} b_{1}^{2}+\frac{3}{2} a_{0}^{3} b_{1}+\frac{3}{4} a_{0} c_{1}^{2}+\frac{3}{2} c_{1} a_{0}^{3}-a_{0} b_{2}-a_{0} c_{2}+2 a_{2}-a_{1} b_{1}+\frac{1}{2} a_{0} b_{1} c_{1}-a_{1} c_{1}+\frac{3}{4} a_{0}^{5}\,.\label{DS_ODEabc4}
	\end{align}

	Denote the basic $\mathfrak{sl}_2(\mathbb{C})$-resolvent  by\begin{equation}
		R(\lambda)=\left(\begin{array}{cc}
			\tilde{a}(\lambda) & \tilde{b}(\lambda) \\
			\tilde{c}(\lambda) & -\tilde{a}(\lambda)
		\end{array}\right)\,. 
	\end{equation}
Explicitly, one has 
\begin{align*}
		\tilde{a}(\lambda)=&a_0+(a_1-\frac{1}{2}a_0b_1-\frac{1}{2}a_0c_1-\frac{1}{2}a_0^3)\lambda^{-1}+\dots\,,\\
		\tilde{b}(\lambda)=&1+(\frac{b_1}{2}-\frac{a_0^2}{2}-\frac{c_1}{2})\lambda^{-1}+\dots\,,\\
		\tilde{c}(\lambda)=&\lambda+(\frac{c_1}{2}-\frac{a_0^2}{2}-\frac{b_1}{2})\\ & +(\frac{3}{4} a_{0}^{2} b_{1}+\frac{1}{4} a_{0}^{2} c_{1}-\frac{1}{4} c_{1} b_{1}-a_{0} a_{1}-\frac{1}{2} b_{2}+\frac{1}{2} c_{2}+\frac{3}{8} a_{0}^{4}-\frac{1}{8} c_{1}^{2}+\frac{3}{8} b_{1}^{2})\lambda^{-1}+\dots \,.
	\end{align*}
			
			Using \eqref{ODE_tauStucture}, we compute  the first few polynomials of the $\tau$-structure as follows:
			\begin{align*}
				\omega_{0,0}=&-\frac{a_{0}^{2}}{2}+\frac{b_{1}}{2}-\frac{c_{1}}{2}\,,\label{OmegA100}\\
				\omega_{0,1}=&-\frac{1}{4} c_{1} b_{1}+\frac{1}{2} b_{2}-\frac{1}{8} b_{1}^{2}-\frac{1}{2} c_{2}+\frac{3}{4} a_{0}^{2} c_{1}+\frac{3}{8} a_{0}^{4}+\frac{1}{4} a_{0}^{2} b_{1}-a_{0} a_{1}+\frac{3}{8} c_{1}^{2}\,,\\
				\omega_{1,1}=&-\frac{1}{8} b_{1}^{3}+\frac{1}{2} b_{3}-\frac{1}{2} c_{3}-\frac{3}{8} a_{0}^{6}-\frac{3}{8} c_{1}^{3}-\frac{3}{2} a_{1}^{2}+2 a_{0} b_{1} a_{1}-\frac{3}{4} b_{1} a_{0}^{2} c_{1}+2 a_{0} c_{1} a_{1}
				+\frac{1}{8} c_{1}^{2} b_{1}+ 2 a_{1} a_{0}^{3}\\ &  -\frac{1}{2} c_{1} b_{2}+\frac{3}{8} c_{1} b_{1}^{2}+c_{1} c_{2}-\frac{1}{2} c_{2} b_{1}-\frac{5}{8} b_{1}^{2} a_{0}^{2}-\frac{7}{8} b_{1} a_{0}^{4}-\frac{9}{8} a_{0}^{2} c_{1}^{2}-\frac{9}{8} c_{1} a_{0}^{4}+a_{0}^{2} c_{2}-a_{0} a_{2}\,.
			\end{align*}

		Let us illustrate the connection between the  infinite commuting ODE system of $A_1$-type and the KdV hierarchy by matching their $\tau$-structures. Set $u=\omega_{0,0}$. 
	From the  ODE system of $A_1$-type \eqref{ODE intro} ({\it cf.}~\eqref{DS_ODEabc0}--\eqref{DS_ODEabc4}), we get
			\begin{equation}\label{A1omega01}
				\omega_{0,1}=\frac{3}{2}u^2+\frac{1}{4}\frac{d^2 u}{d T_0^2}\,,\quad 	\omega_{1,1}=3u^3+\frac{3}{8}\left(\frac{d u}{d{T_0}}\right)^2+\frac{3}{2}u\frac{d^2 u}{d T_0^2}+\frac{1}{16}\frac{d^4 u}{d T_0^4}\,.
			\end{equation}
		Note that these expressions coincide with the $\tau$-structure of the KdV hierarchy, which is guaranteed by Theorem \ref{thm:3}.
				Given a solution $W(\lambda,\mathbf{T})$ to \eqref{ODE intro} of $A_1$-type, we could verify that $u$ satisfies the KdV equation:
				\begin{equation}\label{A1KdV}
					\frac{\partial u}{\partial T_1}=3u\frac{\partial u}{\partial T_0}+\frac{1}{4}\frac{\partial^3 u}{\partial T_0^3}\,. 
				\end{equation}
	 Similarly,  expressions for other $\omega_{k,l}$, $k,l\geq 0$, will lead to higher-order KdV equations.

\begin{rmk}\label{rmk:diff}
 The ODE version of the KdV hierarchy (the $A_1$ case) was originally set up by Dubrovin~\cite{Du2}. 
	The 
        series $W(\lambda)$ \eqref{wkdv} is connected to the one used in~\cite{Du2} by a gauge transformation
        					\begin{equation}
						W(\lambda) \mapsto G\,W(\lambda)\,G^{-1}\,,
						\end{equation}
where \begin{equation}
						G=\begin{pmatrix}
							1 & 0 \\
							a_0 & 1
						\end{pmatrix}\,.
					\end{equation}
For other types of simple Lie algebra, the r\^ole of the gauge transformation in the hamiltonian formalism will be studied separately. 
				\end{rmk}

\subsection{ The infinite commuting ODE system of $A_2$-type}
In this case $\mathfrak{g}=\mathfrak{sl}_3(\mathbb{C})$, and the normalized Killing form is 
\begin{equation}
	\left(A|B\right)={\rm Tr}(AB),\quad A,B\in\mathfrak{g}.
\end{equation}

Let $e_1, \dots, e_8$ be a basis of $\mathfrak{sl}_3(\mathbb{C})$ given by
 \begin{equation}
\begin{aligned}
	&e_1=E_{11}-E_{33},\quad e_2=E_{11}-E_{22},\quad e_3=E_{12},\quad e_4=E_{13},\quad\\
	& e_5=E_{21},\quad e_6=E_{23},\quad e_7=E_{31},\quad e_8=E_{32}\,,
\end{aligned}
\end{equation}
where $E_{ij}$ is the canonical basis of a $3\times 3$ matrix. 
The cyclic element is
 \begin{equation}
\Lambda(\lambda)=\begin{pmatrix}
	0 & 1 & 0 \\
	0 & 0 & 1 \\
	\lambda & 0 & 0
\end{pmatrix}\,,
\end{equation}
and the elements $\Lambda_1(\lambda),\Lambda_2(\lambda)$ are given by
\begin{equation}\label{A2Lambda12}
\Lambda_1(\lambda)=\Lambda(\lambda)\,,\quad \Lambda_2=\Lambda(\lambda)^2\,.
\end{equation}
The principal degree on $\mathfrak{g}$ reads \begin{equation}
\deg E_{ij}=j-i,\quad\deg \lambda=3\,.
\end{equation}

The matrix-valued  formal Laurent series $W(\lambda)$ is given by
\begin{equation}
	W(\lambda)=\Lambda(\lambda)+\sum_{i=1}^{8}u_i(\lambda)e_i=\Lambda(\lambda)+\text{lower degree terms with respect to deg}\,.
\end{equation}
Instead of using upper index for the indeterminates as in \eqref{WDef}, now we use lower index in this example. Precisely, 
\begin{equation}
u_i(\lambda)=\sum_{k\geq -1}u_{i,k}\lambda^{-k-1},i=1,2,5,7,8,\quad u_i(\lambda)=\sum_{k\geq 0}u_{i,k}\lambda^{-k-1},i=3,4,6.
\end{equation}

The first few hamiltonians defined by Lemma \ref{FuLem} and \eqref{H_hLambda_j} are explicitly as follows:
 \begin{align}
	\begin{autobreak}
			h_{1,-1}=
				u_{1,-1}^{2}
				+u_{1,-1} u_{2,-1}
				+u_{2,-1}^{2}
				+u_{4,0}
				+u_{5,-1}
				+u_{8,-1}\,,
			\end{autobreak}\label{A2flowabc}\nonumber \\
		\begin{autobreak}
			h_{2,-1}=
			u_{1,-1}^{2} u_{2,-1}
			+u_{1,-1} u_{2,-1}^{2}
			+u_{5,-1} u_{1,-1}
			-u_{8,-1} u_{1,-1}
			+u_{2,-1} u_{4,0}
			-u_{8,-1} u_{2,-1}
			+u_{3,0}
			+u_{6,0}
			+u_{7,-1}\,,
		\end{autobreak}\nonumber \\
		\begin{autobreak}
		h_{1,0}=
		u_{4,1}
		+u_{5,0}
		-\frac{1}{3} u_{4,0} u_{2,-1}^{3}
		+\frac{1}{3} u_{8,-1} u_{1,-1}^{3}
		+\frac{1}{3} u_{8,-1} u_{2,-1}^{3}
		-\frac{1}{3} u_{6,0} u_{2,-1}^{2}
		-\frac{1}{3} u_{7,-1} u_{2,-1}^{2}
		-\frac{1}{3} u_{5,-1} u_{1,-1}^{3}
		-\frac{1}{3} u_{3,0} u_{1,-1}^{2}
		-\frac{1}{3} u_{3,0} u_{2,-1}^{2}
		+\frac{2}{3} u_{5,-1} u_{3,0}
		-\frac{1}{3} u_{5,-1}^{2} u_{1,-1}
		-\frac{1}{3} u_{5,-1} u_{6,0}
		-\frac{1}{3} u_{5,-1} u_{7,-1}
		+\frac{2}{3} u_{6,0} u_{8,-1}
		-\frac{1}{3} u_{6,0} u_{1,-1}^{2}
		-\frac{1}{3} u_{8,-1} u_{3,0}
		-\frac{1}{3} u_{8,-1} u_{7,-1}
		+\frac{1}{3} u_{8,-1}^{2} u_{1,-1}
		+\frac{1}{3} u_{8,-1}^{2} u_{2,-1}
		-\frac{1}{3} u_{4,0}^{2} u_{2,-1}
		-\frac{1}{3} u_{1,-1}^{4} u_{2,-1}
		+u_{1,0} u_{2,-1}
		+u_{2,0} u_{1,-1}
		+2 u_{2,0} u_{2,-1}
		-\frac{2}{3} u_{1,-1}^{3} u_{2,-1}^{2}
		-\frac{2}{3} u_{1,-1}^{2} u_{2,-1}^{3}
		-\frac{1}{3} u_{7,-1} u_{1,-1}^{2}
		+2 u_{1,0} u_{1,-1}
		+\frac{2}{3} u_{4,0} u_{7,-1}
		-\frac{1}{3} u_{3,0} u_{4,0}
		-\frac{1}{3} u_{6,0} u_{4,0}
		-\frac{1}{3} u_{1,-1} u_{2,-1}^{4}
		+\frac{1}{3} u_{4,0} u_{8,-1} u_{1,-1}
		-\frac{2}{3} u_{4,0} u_{1,-1}^{2} u_{2,-1}
		-\frac{2}{3} u_{4,0} u_{1,-1} u_{2,-1}^{2}
		-\frac{2}{3} u_{5,-1} u_{1,-1}^{2} u_{2,-1}
		-\frac{2}{3} u_{5,-1} u_{1,-1} u_{2,-1}^{2}
		+\frac{1}{3} u_{1,-1}^{2} u_{2,-1} u_{8,-1}
		+\frac{1}{3} u_{1,-1} u_{2,-1}^{2} u_{8,-1}
		-\frac{1}{3} u_{4,0} u_{5,-1} u_{1,-1}
		-\frac{1}{3} u_{4,0} u_{5,-1} u_{2,-1}
		+\frac{1}{3} u_{8,-1} u_{5,-1} u_{2,-1}
		+u_{8,0}
		-\frac{1}{3} u_{6,0} u_{2,-1} u_{1,-1}
		-\frac{1}{3} u_{7,-1} u_{1,-1} u_{2,-1}
		-\frac{1}{3} u_{3,0} u_{1,-1} u_{2,-1}\,,
	\end{autobreak}\nonumber \\
\begin{autobreak}
 h_{2,0}=
	u_{3,1}+u_{6,1}
	-\frac{1}{9} u_{5,-1} u_{1,-1}^{4}
	-\frac{4}{9} u_{5,-1}^{2} u_{1,-1}^{2}
	-\frac{1}{9} u_{5,-1}^{2} u_{2,-1}^{2}
	-\frac{4}{9} u_{8,-1}^{2} u_{1,-1}^{2}
	-\frac{4}{9}u_{8,-1}^{2}u_{2,-1}^{2}
	+u_{1,0}u_{2,-1}^{2}
	+u_{2,0} u_{1,-1}^{2}
	-\frac{4}{9} u_{4,0}^{2} u_{2,-1}^{2}
	-\frac{1}{9} u_{4,0} u_{5,-1}^{2}
	+u_{4,0} u_{2,0}
	-\frac{1}{9} u_{4,0}^{2} u_{8,-1}
	-\frac{1}{9} u_{4,0} u_{8,-1}^{2}
	-\frac{1}{9} u_{8,-1} u_{5,-1}^{2}
	-u_{8,-1} u_{1,0}
	-\frac{1}{9} u_{4,0} u_{1,-1}^{4}
	-\frac{1}{9} u_{4,0}^{2} u_{1,-1}^{2}
	+\frac{1}{3} u_{3,0} u_{6,0}
	+\frac{1}{3} u_{3,0} u_{7,-1}
	+\frac{1}{3} u_{6,0} u_{7,-1}
	-\frac{1}{9} u_{1,-1}^{5} u_{2,-1}
	-\frac{1}{9} u_{2,-1}^{5} u_{1,-1}
	-\frac{5}{9} u_{1,-1}^{2} u_{2,-1}^{4}
	-\frac{5}{9} u_{1,-1}^{4} u_{2,-1}^{2}
	-\frac{25}{27} u_{1,-1}^{3} u_{2,-1}^{3}
	+u_{5,0} u_{1,-1}
	-\frac{1}{9} u_{4,0}^{2} u_{5,-1}
	-\frac{1}{9} u_{5,-1} u_{8,-1}^{2}+u_{5,-1} u_{1,0}
	+u_{4,1} u_{2,-1}
	-u_{8,0} u_{1,-1}
	-\frac{1}{9} u_{5,-1} u_{2,-1}^{4}
	-\frac{1}{9} u_{4,0} u_{2,-1}^{4}
	-\frac{1}{9} u_{8,-1} u_{1,-1}^{4}
	-\frac{1}{9} u_{8,-1} u_{2,-1}^{4}
	+u_{7,0}
	-\frac{1}{3} u_{6,0} u_{8,-1} u_{2,-1}
	+\frac{2}{3} u_{8,-1} u_{3,0} u_{2,-1}
	+\frac{2}{3} u_{7,-1} u_{8,-1} u_{2,-1}
	-\frac{8}{9} u_{4,0} u_{5,-1} u_{1,-1} u_{2,-1}
	+\frac{4}{9} u_{4,0} u_{8,-1} u_{1,-1} u_{2,-1}
	+\frac{4}{9} u_{5,-1} u_{8,-1} u_{1,-1} u_{2,-1}
	-\frac{1}{27} u_{1,-1}^{6}
	-\frac{1}{27} u_{2,-1}^{6}
	-\frac{1}{27} u_{8,-1}^{3}
	-\frac{1}{3} u_{3,0}^{2}
	-\frac{1}{27} u_{5,-1}^{3}
	-\frac{1}{3} u_{6,0}^{2}
	-\frac{1}{27} u_{4,0}^{3}
	-\frac{1}{3} u_{7,-1}^{2}
	-u_{8,0} u_{2,-1}
	-u_{8,-1} u_{2,0}
	+\frac{1}{3} u_{3,0} u_{5,-1} u_{1,-1}
	-\frac{2}{3} u_{4,0} u_{6,0} u_{2,-1}
	+\frac{1}{3} u_{4,0} u_{7,-1} u_{2,-1}
	-\frac{1}{3} u_{6,0} u_{8,-1} u_{1,-1}
	-u_{5,-1} u_{1,-1}^{2} u_{2,-1}^{2}
	-\frac{2}{9} u_{4,0} u_{1,-1}^{3} u_{2,-1}
	-u_{4,0} u_{1,-1}^{2} u_{2,-1}^{2}
	-\frac{2}{3} u_{3,0} u_{2,-1}^{2} u_{1,-1}
	-\frac{2}{9} u_{4,0} u_{5,-1} u_{1,-1}^{2}
	-\frac{2}{9} u_{4,0} u_{5,-1} u_{2,-1}^{2}
	-\frac{2}{9} u_{4,0} u_{8,-1} u_{1,-1}^{2}
	+\frac{4}{9} u_{4,0} u_{8,-1} u_{2,-1}^{2}
	+u_{8,-1} u_{1,-1}^{2} u_{2,-1}^{2}
	-\frac{2}{3} u_{6,0} u_{1,-1} u_{2,-1}^{2}
	-\frac{2}{3} u_{5,-1} u_{6,0} u_{1,-1}
	+\frac{2}{3} u_{8,-1} u_{3,0} u_{1,-1}
	-\frac{2}{9} u_{8,-1} u_{2,-1}^{2} u_{5,-1}
	-\frac{2}{3} u_{7,-1} u_{5,-1} u_{1,-1}
	+\frac{2}{3} u_{7,-1} u_{8,-1} u_{1,-1}
	-\frac{2}{9} u_{5,-1} u_{2,-1}^{3} u_{1,-1}
	-\frac{8}{9} u_{4,0} u_{2,-1}^{3} u_{1,-1}
	+\frac{4}{9} u_{8,-1} u_{1,-1}^{3} u_{2,-1}
	+\frac{4}{9} u_{8,-1} u_{2,-1}^{3} u_{1,-1}
	-\frac{2}{3} u_{6,0} u_{1,-1}^{2} u_{2,-1}
	-\frac{2}{3} u_{7,-1} u_{1,-1}^{2} u_{2,-1}
	-\frac{2}{3} u_{7,-1} u_{2,-1}^{2} u_{1,-1}
	-\frac{8}{9} u_{5,-1} u_{1,-1}^{3} u_{2,-1}
	-\frac{2}{3} u_{3,0} u_{1,-1}^{2} u_{2,-1}
	-\frac{1}{9} u_{5,-1}^{2} u_{1,-1} u_{2,-1}
	-\frac{7}{9} u_{8,-1}^{2} u_{1,-1} u_{2,-1}
	+2 u_{1,0} u_{1,-1} u_{2,-1}
	+2 u_{2,0} u_{1,-1} u_{2,-1}
	-\frac{2}{3} u_{4,0} u_{3,0} u_{2,-1}
	-\frac{1}{9} u_{4,0}^{2} u_{2,-1} u_{1,-1}
	+\frac{4}{9} u_{5,-1} u_{8,-1} u_{1,-1}^{2}
	+\frac{7}{9} u_{4,0} u_{8,-1} u_{5,-1}\,.
\end{autobreak}\nonumber
\end{align}

It follows from \eqref{DefBasicResolvent}, \eqref{Sec3WhR} and \eqref{A2Lambda12} that the generating series of hamiltonians $h_1(\lambda)$, $h_2(\lambda)$ defined in \eqref{hamiltonianSeries} satisfy the following relation
\begin{align}
	{\rm Tr}(W(\lambda)^2)=\frac{2 h_1(\lambda)h_2(\lambda)}{3\lambda}\,,\quad {\rm Tr}(W(\lambda)^3)=\frac{h_1(\lambda)^3+\lambda h_2(\lambda)^3}{9\lambda^2}\,.
\end{align}

Some first few polynomials of the $\tau$-structure are in the forms
\begin{align}
		\begin{autobreak}
			\omega_{1,0;1,0}=
			-\frac{1}{3} u_{8,-1}
			-\frac{1}{3} u_{5,-1}
			-\frac{1}{3} u_{1,-1}^{2}
			-\frac{1}{3} u_{2,-1}^{2}
			-\frac{1}{3} u_{1,-1} u_{2,-1}
			+\frac{2}{3} u_{4,0}\,,
		\end{autobreak}\nonumber\\
		\begin{autobreak}
			\omega_{1,0;2,0}=
			\frac{1}{3} u_{3,0}
			+\frac{1}{3} u_{6,0}
			+\frac{1}{3} u_{2,-1} u_{4,0}
			-\frac{2}{3} u_{1,-1}^{2} u_{2,-1}
			-\frac{2}{3} u_{1,-1} u_{2,-1}^{2}
			+\frac{2}{3} u_{8,-1} u_{1,-1}
			+\frac{2}{3} u_{8,-1} u_{2,-1}
			-\frac{2}{3} u_{5,-1} u_{1,-1}
			-\frac{2}{3} u_{7,-1}\,,
		\end{autobreak}\nonumber\\\begin{autobreak}
		\omega_{2,0;2,0}=
		-\frac{2}{9} u_{8,-1}^{2}
		+u_{3,0} u_{1,-1}
		-u_{2,0}
		-\frac{4}{9} u_{4,0} u_{1,-1} u_{2,-1}
		-\frac{4}{9} u_{5,-1} u_{1,-1} u_{2,-1}
		-\frac{4}{9} u_{1,-1} u_{2,-1} u_{8,-1}
		-\frac{2}{9} u_{4,0}^{2}
		-\frac{2}{9} u_{5,-1}^{2}
		-\frac{2}{9} u_{2,-1}^{4}
		-\frac{2}{9} u_{1,-1}^{4}
		-u_{1,-1} u_{6,0}
		-\frac{4}{9} u_{5,-1} u_{2,-1}^{2}
		-\frac{4}{9} u_{1,-1}^{3} u_{2,-1}
		-\frac{4}{9} u_{1,-1} u_{2,-1}^{3}
		-\frac{4}{9} u_{4,0} u_{1,-1}^{2}
		+\frac{5}{9} u_{4,0} u_{2,-1}^{2}
		+\frac{5}{9} u_{4,0} u_{8,-1}
		-\frac{4}{9} u_{8,-1} u_{1,-1}^{2}
		-\frac{4}{9} u_{8,-1} u_{2,-1}^{2}
		-\frac{4}{9} u_{8,-1} u_{5,-1}
		-\frac{4}{9} u_{5,-1} u_{1,-1}^{2}
		+u_{3,0} u_{2,-1}
		-\frac{2}{3} u_{1,-1}^{2} u_{2,-1}^{2}
		+\frac{5}{9} u_{4,0} u_{5,-1}\,.
	\end{autobreak}\nonumber
\end{align}

Set $r_1=\omega_{1,0;1,0}$, $r_2=\omega_{1,0;2,0}$. Then, one has
\begin{equation}\label{A2omega20}
\omega_{2,0;2,0}=-\frac{1}{3}	\frac{d^2 r_{1}}{d(T^1_0)^2}-2r_1^2\,.
\end{equation}
Using \eqref{ODE intro} together with the above expressions, one obtains the  Boussinesq equation
\begin{align}
	&\frac{\partial r_1}{\partial T^2_0}=\frac{\partial r_2}{\partial T^1_0}\,,\label{normalA2r1}\\
	&\frac{\partial r_2}{\partial T^2_0}=-\frac{1}{3}	\frac{\partial^3 r_{1}}{\partial(T^1_0)^3}-4r_1\frac{\partial r_1}{\partial T^1_0}\,.\label{normalA2r2}
\end{align}

\section{Conclusion}
\label{sec:concl}
As a natural continuation of a recent work set up by Dubrovin~\cite{Du2} in the study of the KdV hierarchy, an infinite family of (hamiltonian) pairwise commuting ODE system (with the hamiltonians being in involution) associated to any simple Lie algebra $\g$ was provided ({\it cf.}~\eqref{ODE intro}) in this paper. The $\tau$-functions for the ODE system were also defined, and  were shown to 
be those for the Drinfeld--Sokolov hierarchy of~$\g$-type (and vice versa).

Among many possible perspectives of the present work we can think of, we would like to mention the followings that are of importance, and that the ODE formalism could have an advantage in the studies of integrable systems:
\begin{itemize}
\item[1.] The hamiltonian structure of the ODE system  could be quantized using the canonical quantization. This  might also lead to a quantization procedure for the Drinfeld--Sokolov hierarchy and related integrable structures such as the $\tau$-functions. 
\item[2.] The construction of the theta-functions as special $\tau$-functions could be more straightforward using the ODE formalism. This was  already  shown by Dubrovin for the KdV case \cite{Du2, Du3}, but could now be applied to other  Drinfeld--Sokolov hierarchies.
  \item[3.] We could generalize the so-called Dubrovin equations \cite{Du0} in the study of finite-gap solutions of integrable PDEs to the infinite case, as for the KdV case a finite ODE system (using a truncated $W(\lambda)$) is closely related to the Dubrovin equations of finite-gap solutions  \cite{Du2}. 
\item[4.]
It would be interesting to construct the ODE systems and their $\tau$-functions with the Poisson bracket given by the $r$-matrix of other type (trigonometric or elliptic, cf.~\cite{BD-r-matrix, Sklyanin, FT-book}).
\end{itemize}
We hope to study the above questions in subsequent papers.

\section*{Acknowledgments}
D.~Yang is supported by NSFC (No.~12371254), and by the CAS Project for Young Scientists in Basic Research (No.~YSBR-032). C.~Zhang is supported by NSFC (No.~12171306). 
\appendix

\medskip

\medskip

\noindent Di Yang

\noindent {\rm   School of Mathematical Sciences, University of Science and Technology of China, 
 Hefei 230026, China}

\noindent  \url{diyang@ustc.edu.cn}

~~~~~

\noindent Cheng Zhang

\noindent {\rm Department of Mathematics, Shanghai University, Shanghai, 200444, China}

\noindent {\rm Newtouch Center for Mathematics of Shanghai University, Shanghai 200444, China}

\noindent \url{ch.zhang.maths@gmail.com}


~~~~~

\noindent Zejun Zhou

\noindent  {\rm School of Mathematical Sciences, University of Science and Technology of China, 
 Hefei 230026, China}

\noindent \url{zzj24601@mail.ustc.edu.cn}

\end{document}